\newtheorem{proposition}{Proposition}
\title{Cultural Transmission, Property Rights, and Treatment of the Elderly}
\author{Matthew J. Baker \\ Department of Economics \\
Hunter College and the Graduate Center, CUNY \\ \\ and \\ \\ Joyce P. Jacobsen \\ Department of Economics \\ Hobart and William Smith Colleges \\ and Wesleyan University}
\begin{document}

\maketitle 

\begin{abstract}
We examine how production and the development of property rights interact with cultural transmission to shape the treatment of the elderly across societies. Our model posits that respect for the elderly arises endogenously: parents invest in cultivating cultural values in their children, who later reciprocate in proportion to this investment. We show that this model is functionally equivalent to one in which cultural goods are transferred by the elderly. We focus on the distinct roles of property rights, finding that while insecure output rights may promote elderly welfare, secure rights over productive resources can have comparable benefits. The model reveals a nonlinear relationship between cultural sophistication, property rights, and economic factors such as the capital and land intensity of production, driving variations in elderly well-being across societies. Finally, we consider how the model suggests demographic, technological, and policy changes influence elderly well-being across the spectrum of development.\end{abstract}

\thispagestyle{empty}
\newpage
\clearpage
\pagenumbering{arabic}

\onehalfspacing

\section{Introduction}

Aging and the elderly have always been part of human existence. Living into old age is common among hunter-gatherers and was likely even common among early Homo sapiens, suggesting that long lifespans may be an evolutionary adaptation to the hunter-gatherer lifestyle \citep{gurven09, kaplanrobson02, kaplanrobson03}. However, the experience of aging varies significantly across societies. In modern economies, the elderly often sustain themselves through personal means and government-provided benefits \citep{un19}. By contrast, in traditional societies, including preliterate ones, the elderly's roles and treatment are diverse. They may be revered for their knowledge, act as cultural custodians, or benefit from specialized or evolving property rights. In some cases, their longevity enables them to accumulate productive assets and experience.

Demographic and technological shifts are reshaping institutions and cultural practices surrounding aging. Increased longevity and declining fertility have dramatically expanded the elderly population worldwide, both in total and relative to younger ages. The global population aged 65 or older is expected to rise from 524 million in 2010 to 1.5 billion by 2050 \citep{who11}. In 1950, there were three children under five for every person over 65; by 2050, this ratio is projected to reverse \citep{who11}. These changes influence work, consumption, and savings patterns, particularly for the elderly. Popular works, such as Diamond (2012, Chapter 6), highlight aspects of this transformation, and some governments are responding with targeted policies \citep{un19}.\footnote{See \url{https://theweek.com/articles/462230/how-elderly-are-treated-around-world}.}

Across all economic development levels, but especially in less developed economies, the elderly play a key role in preserving and transmitting culture. This raises critical questions: how might demographic and technological changes affecting the elderly influence cultural transmission? \citep[p. 5]{who11}. Notably, much of the anticipated growth in the elderly population will occur in lower income countries, where tradition and culture strongly shape their condition \citep{who11, un19}. 

We present a simple model to examine the factors shaping elderly treatment across societies, from hunter-gatherers to modern economies. The model shows how material conditions shape cultural practices supporting the elderly and how technological and demographic shifts transform these practices. It highlights the interaction between property rights over productive assets and output, elderly treatment, and cultural transmission. A key result is the emergence of what anthropologists refer to as a `curvilinear' pattern: societies midway between traditional and modern economies are predicted to have the strongest cultural traditions of reverence for the elderly.

The rest of this paper is organized as follows. In Section \ref{lit}, we relevant models from the literature on the economics of aging, and how they relate to property rights, cultural transmission, and modernization. In Section \ref{ethnosec}, we discuss aspects of the cross-cultural literature on aging and treatment of the elderly.  In Section \ref{theorysec}, we present our model, which emphasizes property rights, and shows how a `curvilinear' pattern in cultural concern for the elderly emerges. Section \ref{conclusion} discusses implications of results and concludes.

\section{Literature} \label{lit}

Economic analysis of aging has long focused on the connection between aging, savings behavior, and capital accumulation, central to many dynamic economic models. For example, the life-cycle model of \citet{modigliani66} and the overlapping generations (OLG) models of \citet{samuelson58} and \citet{diamond65} explicitly link these factors. The OLG model has proven particularly useful for studying how aging interacts with family structure, intergenerational transfers, and capital accumulation \citep{michel06}. Some models, such as those by \citet{boldrin02} and \citet{nishzhang95}, invert the traditional altruism paradigm, exploring child-to-parent altruism, which naturally generates elderly support and investment in children's education and fertility.

Other research emphasizes the strategic role of intergenerational transfers in family cohesion. For example, \citet{cigno93} shows that family-based transfers can be self-enforcing under specific economic conditions, such as the availability of capital markets and social support systems \citep{kaplanrobson03}. Studies like \citet{morand99} and \citet{blackburn05} further explore the connection between fertility, growth, and elderly support, highlighting transitions from high-fertility, traditional economies to low-fertility, modern ones, with changing patterns of intergenerational transfers.

Models of cultural transmission, such as those by \citet{bisver00, bisver01}, offer additional insights into elderly treatment. \citet{olivera13} demonstrates how norms of elderly care can persist despite declining fertility and rising life expectancy. Similarly, \citet{varvarigos21} links strong family ties, intergenerational transfers, and traditional means of production to evolving societal norms. \citet{cremer13} and \citet{ponthiere13} analyze the interaction between social support systems and familial altruism, showing how state-provided care can influence family dynamics and value transmission.

Our approach builds on these foundations but focuses on how cultural transmission interacts with elderly treatment. Unlike models centered on genetic or altruistic motives, we examine how middle-aged individuals cultivate their children's interest in cultural goods provided by the elderly, creating a direct link between cultural preservation and elderly support. This perspective aligns with \citet{cozzi98}, where cultural investment serves as a form of wealth storage for future consumption.

Empirical work on altruism toward the elderly supports the complexity of these dynamics. While early studies struggled to disentangle altruistic and self-interested motives \citep{laferrere99}, recent research, such as \citet{klima17}, finds evidence of `moderate altruism' driving intergenerational transfers.

Our model advances this literature by integrating strategic and altruistic aspects of elderly treatment with economic growth, property rights, and production. Unlike most prior research, which uses partial equilibrium approaches, we adopt a general equilibrium framework, as in \citet[Chapter 9]{acemoglu09}. This approach allows us to explore how preferences, technology, and elderly treatment co-evolve and to develop measures of `elderly treatment' and `elderly prestige.' It also connects technological change with the evolving status and treatment of the elderly, providing a comprehensive framework for understanding these dynamics across cultures and economies.

\section{The Ethnology of Elderly Treatment} \label{ethnosec}

\subsection{Cross-cultural research}

A foundational study in cross-societal analysis of elderly treatment is \citet{simmons70} (originally published in 1945), which surveyed 71 societies, including hunter-gatherers, pastoralists, and subsistence agriculturalists from diverse technological, geographical, and environmental conditions. Simmons aimed to capture a broad cross-section of societies to identify patterns in elderly treatment and how these patterns vary with subsistence strategies and societal complexity.\footnote{Simmons's methodology resembles the approaches used by G. P. Murdock and collaborators in the Ethnographic Atlas \citep{murdock69} and the Standard Cross-Cultural Sample \citep{murdockwhite69}, as well as data utilized in \citet{pryor73}. For applications of cross-cultural data in economic modeling, see \citet{bakerjacobsen2007a, bakerjacobsen2007b, bakermiceli2005}.}

Simmons’s work provides a comprehensive snapshot of elderly treatment across societies before the influence of global modernization and established a framework for subsequent cross-cultural studies. It highlights how the role of the elderly evolves with societal complexity and subsistence strategies. Our summary draws on Simmons' findings, the review by \citet{holmes76}, and the discussion in \citet{arnhoff64}.

\subsubsection{Food Sharing and Taboos} 

In societies without well-developed property rights, such as subsistence hunter-gatherer groups, communal food sharing and food taboos provide consumption security for the elderly. As property rights in land and livestock develop, elderly consumption becomes more secure \citep[Ch. 2]{simmons70}.  

Food sharing is common among hunter-gatherers, motivated by factors such as insurance, exchange, or resource management, though pinpointing a singular reason is difficult.\footnote{See \citet{bakerswope21} for an overview of sharing theories among hunter-gatherers, and \citet[Chapter 6]{kelly13} for an extensive discussion.} Sharing typically redistributes resources `vertically' from adults to both the elderly and children, and `horizontally' among hunters \citep{rlee03}. Moreover, hunter-gatherers often lack individual ownership rights over food and land, correlating shared food output with communal access to resources.\footnote{However, hunter-gatherers can still be territorial at the group level; see \citep{baker2003}.}  

Food taboos advantageous to the elderly, described as `Victual Deference' by \citet{silverman78}, are widespread among hunter-gatherers in various environments. Simmons noted that these taboos serve as ``a social expedient of particular advantage to the aged'' and are often manipulated by the elderly for personal benefit \citep[p. 26]{simmons70}. Examples from Simmons include the Inuit, Omaha, Pomo, Iban, and Arunta, where tabooed foods often favored the elderly. Additional examples include the \textit{Ju/'hoansi} of the Kalahari Desert, where elders are believed to have powers enabling them to consume foods considered too dangerous for younger individuals, such as ostrich eggs \citep[p. 33]{rosenberg09}. Similarly, among the Chukchee of Siberia, reindeer milk is deemed harmful to the young \citep{mcye81, bogoras04}. In the Cook Islands, the Pukapukan people maintained a system of food sharing and taboos that allowed elders to control distribution and reserve nutritious parts, such as belly fats and internal organs, for themselves \citep[p. 10]{simmons70}.  

\subsubsection{Property rights}

While nonexclusive food sharing benefits the elderly, Simmons observes that ``property rights have been lifesavers for the aged'' \citep[p. 36]{simmons70}, suggesting that well-defined property rights over assets also benefit them. Simmons distinguishes between these types of rights, emphasizing that while nonexclusive rights over food help sustain the elderly, ownership of assets like land or livestock provides additional advantages.  

Simmons documents asset accumulation by the elderly, such as working animals and livestock, in groups like the Navajo (Southwestern North America), Chukchi (Siberia), Yakut (Russian Far East), and Akamba (Kenya). In agricultural societies with land ownership, respect for the elderly often aligns with asset accumulation. As Simmons notes, ``Firmly entrenched property rights which enhanced the powers of the aged have also been very common among all cultivators of the soil'' \citep[p. 44]{simmons70}.

\subsubsection{Knowledge and the Elderly}

The elderly often secure livelihoods by providing valuable services, including administrative tasks that convey power and prestige. Among less developed societies in Simmons' sample, elders may receive payment for childcare, food processing, technical knowledge, or administrative expertise. Their status improves with the importance of tradition and the degree of specialized knowledge they possess.  

Simmons also highlights the provision of cultural goods by the elderly in exchange for compensation. For instance, elderly Navajo were paid for knowledge of magic songs and charms, while Pomo and Lengua medicine men earned fees for performing rites \citep[p. 37-38]{simmons70}. Cultural conditioning plays a role in establishing property rights over such intangible assets, as with rights over songs and charms, and in shaping practices like food tabooing or game-sharing rules that benefit the elderly.  

Beliefs in the magical powers of the elderly are widespread among subsistence hunters and gatherers, instilling fear and respect. Elders are often thought capable of bringing or curing disease, beliefs actively cultivated across varied contexts, including the Yahgan (Tierra del Fuego), Arunta (Australia), Greenland Inuit, Pomo (California Desert), and Witoto (Northern Peru). Among the Ju/'hoansi, some elders are believed to have healing powers derived from their ability to harness mystical forces without being overwhelmed \citep[p. 34]{rosenberg09}.

\subsubsection{Inculcation}

Attitudes and practices toward the elderly are consciously instilled in the young, often at significant cost to parents. \citet{simmons62} emphasizes that such treatment arises from deliberate application of resources and effort, stating, `Respect for old age has resulted from social discipline' \citep[p. 50]{simmons70}. Expanding on this point, Simmons observes:  

\begin{quote} ...while human beings can be as indifferent to the dependency needs of their ancestors...they *can also* learn, be taught, inspired, or impelled to respect, succor, and sustain their elders...these patterns of group and filial responsibility are ‘passed on’ from generation to generation...This learned behavior is called `culture'.” \citep[p. 37]{simmons62}  
\end{quote}

Cultural conditioning thus enables the elderly to access and control resources they might otherwise lack.

\subsection{Elderly Status and Cultural Complexity}

Subsequent cross-cultural research on elderly treatment reaches similar conclusions. \citet[p. 298]{press72} summarize \citet{cowgill71} and their ``ostensibly universal generalizations about the status of the aged,'' that align closely with \citet{simmons70}. These include the high status of the elderly in contexts where knowledge accumulation is economically important, property rights are accessible, and the elderly can perform meaningful economic tasks. In fact, \citet[p. 299]{press72} note that their independently derived determinants of elderly status coincided with Cowgill’s findings.  

Although much focus has been on hunter-gatherers and subsistence agricultural or pastoral societies, later research explores how elderly treatment evolves with increasing technological complexity. This involves changes in property rights, social complexity, and production methods. However, defining a consistent trajectory for elderly treatment as societies develop is challenging. Research suggests a nonlinear relationship, with treatment varying across levels of societal complexity. Terms like `status' and `good treatment' are often poorly defined, complicating analysis.\footnote{\citet{dowd81} and \citet{balkwell81} highlight the issues with vague terminology.} For example, comparisons of elderly treatment in hunter-gatherer versus modern industrial societies may hinge on unclear definitions of status and modernity.  

A significant characterization is the `curvilinear hypothesis' \citep{blumberg72}, as developed by \citet{nimkoff60}.\footnote{See \citet{lee96} for an overview, as well as \citet{lee81} and \citet{bakermiceli2005}.} This hypothesis posits a nonlinear relationship between social complexity and family structure: hunter-gatherers and advanced societies tend to have smaller families, while intermediate-level societies feature expansive familial structures. In societies of intermediate complexity, the elderly often hold high status, as documented by \citet{lee79}. However, this literature often conflates two factors: preferential treatment due to cultural norms and respect derived from accumulated resources. The core question remains: how do asset accumulation and cultural transmission interact?  

\subsection{Some quantitative evidence}

\citet{simmons70} provides extensive cross-tabulations of variables to highlight empirical regularities in elderly treatment. This section summarizes his data and presents evidence on the roles, inculcation, and treatment of the elderly across societies. Simmons’ data, while valuable, poses challenges for advanced analysis due to its small sample size and missing values. To address this, we constructed indices summarizing key societal features related to the elderly. These indices suggest meaningful relationships among societal characteristics. Appendix \ref{datapp} provides a detailed discussion of the data and index construction.  

The data includes qualitative traits scaled into indices, summarized in Table \ref{datadescription}. The first four indices consider the elderly’s role in directly economically productive activities, in social production, in advice and knowledge provision, and the holding of titular roles in the society. The fifth index combines the first three forms of production. The sixth index considers the openness of the society to market-based economic activity. The seventh index measures positive treatment of the elderly, and the eighth index measures the amount of positive inculcation towards the elderly.  

\begin{centering}
\begin{table}[h!]\centering
\begin{tabularx}{\textwidth}{X|X}
\toprule
Index    & Description           \\
\midrule
Contribution to direct production              & Tasks such as trading, childcare, hunting support, agriculture, household work, and collecting fees for services. \\
Contribution to social production              & Performing marriages, initiation rites, funerals, and roles as priests or shamans. \\
Knowledge and advice provision              & Providing advice and information. \\ 
Titular duties                          & Holding roles as chiefs, elders, and judges. \\
Combined production            & Sum of direct production, social production, and knowledge and advice provision indices. \\
Openness of economy       & Presence of property rights, trading, and currency usage. \\ \hline
Positive treatment             & Respect, distinction, and familial support for the elderly. \\
Positive inculcation             & Beliefs portraying the elderly as daimons, heroes, and/or children’s friends. \\
\bottomrule
\end{tabularx} 
\caption{Indices constructed from data in \citet{simmons70}.}
\label{datadescription}
\end{table}
\end{centering}

We then correlated the six production and openness of economy indices with the two indices for positive treatment of the elderly and positive inculcation towards the elderly. Table \ref{crosstabs} presents these correlations. Positive treatment of the elderly is positively correlated with the elderly’s contribution to production and with their provision of advice and knowledge, as well as with the combined production index and with their holding of titular duties. Positive inculcation regarding the elderly is also positively correlated with these categories except for titular duties. Positive treatment and positive inculcation are positively correlated. Interestingly, the open economy index is not correlated with  positive treatment and shows some evidence (though not at standard statistical significance) or negative correlation with positive inculcation. We shall consider these patterns in the context of consistency with our formal model later in the paper.   

\begin{table}[h!]
{
\def\sym#1{\ifmmode^{#1}\else\(^{#1}\)\fi}
\begin{tabular}{lcc}
\toprule
      &  Positive treatment & Positive inculcation           \\
\midrule
Contribution to direct production              & $0.38^*$    &  $0.24^*$    \\
Contribution to social production              &  0.21    &  $0.22$    \\
Knowledge and advice provision              &  $0.43^*$    & $0.45^*$     \\
Combined production             &  $0.41^*$    &   $0.33^*$   \\
Titular duties                           & $0.36^*$     & $0.15$   \\
Openness of economy       &  0.05    & -0.16     \\
Positive treatment             &   -   & $0.45^*$ \\
\bottomrule
\end{tabular} 
\caption{Pairwise correlations between indices based on \citet{simmons70}. $^*$ denotes significance at 95\%. See Appendix \ref{datapp} for variable details.}
\label{crosstabs}
}
\end{table}

\subsection{Modernization and the Elderly}

As globalization and technological change advance, population growth slows, and the global population ages, understanding the interaction between technical change and cultural practices related to the elderly becomes increasingly critical. While industrialization is often considered detrimental to elderly well-being \citep[e.g.,]{dowd81}, the relationship between industrialization and elderly welfare is nuanced. This complexity remains relevant even in modern societies, raising questions about how institutions like social security systems influence care for the elderly. Such questions grow in significance as many regions transition from traditional to modern economies in the coming decades.

For example, \citet{koda18} find that public transfers can crowd out private care for the elderly, and that the implications of modernization, including lower fertility and smaller families, tend to reduce familial support. These findings align with theories in \citet{koda18} and \citet{cigno93}. However, \citet{ko21}, in their study of pensions in rural China, report that public pensions increase both well-being and intergenerational exchange. Similarly, \citet{kodate17} describe the complex interplay between public funding, paid healthcare workers, cultural norms, and family care.

The treatment of the elderly can influence and be influenced by consumption smoothing. Allocating resources for the future may enhance elderly care through cultural norms, asset accumulation, or both, depending on property rights and the nature of production.

Our model is constructed to shed light on these trade-offs, offering insight into how elderly treatment may evolve in the coming decades. By capturing the interaction between elderly treatment and economic attributes, the model generates predictions consistent with the complexity observed in the ethnographic record. In addition, it helps clarify vague concepts in the empirical literature such as `status,' `prestige,' or what it means for the elderly to be `treated well.' 

\section{Theoretical Framework} \label{theorysec}

The model is a standard infinite-horizon overlapping-generations framework, where young, middle-aged, and elderly agents coexist at any given time. Decision-making occurs solely in middle age. Young agents derive no utility but may be inculcated with an appreciation for culture. In the base model, middle-aged individuals cultivate a preference in the young for giving gifts to the elderly. Upon reaching middle age, these individuals transfer resources to the elderly, deriving `warm glow utility' as described by \citet{andreoni89}. This willingness to transfer depends on the degree of appreciation instilled in them during youth by the current middle-aged cohort.\footnote{This approach resembles models in which cultural heritage and societal devotion are endogenously determined, such as \citet{cozzi98}. However, Cozzi's model assumes self-fulfilling beliefs about culture without active cultivation.}  

Middle-aged and elderly consumption at time $ t $ are denoted \( c_{mt} \) and \( c_{et} \), with gifts to the elderly represented by \( g_{mt} \). The lifetime utility of a middle-aged agent is given by the log-linear utility function:
\begin{equation} \label{lifetimeU}
    U_t = (1-\eta_t)\ln c_{mt} + \eta_t \ln g_{mt} + \beta \ln c_{et+1},
\end{equation}
where \( \eta_t \) measures the relative importance of gift-giving in middle-aged utility.

Middle-aged agents receive income \( y_{mt} \), allocated to consumption \( c_{mt} \), savings \( s_{mt} \), gifts \( g_{mt} \), and inculcative investment in the young. Inculcation determines the future demand parameter \( \eta_{t+1} \), reflecting the young's appreciation for culture when they reach middle age.\footnote{This contrasts with \citet{olivera13} and \citet{varvarigos21}, where inculcation increases the likelihood of preference similarity across generations. These can be described as `Simpsonian' models, inspired by Homer Simpson's parenting philosophy: ``Kids are the best… You can teach them to hate the things you hate.'' (\textit{The Simpsons}, season 11, episode 233, 1999).} The cost of inculcation is \( d_t \eta_{t+1} \), with \( d_t = \delta y_{mt} \) (\( \delta \in (0,1) \)), scaling costs to income.\footnote{This assumption ensures equilibria remain stationary, facilitating balanced-growth path analysis.} Middle-aged consumption is thus:
\begin{equation} \label{mac}
    c_{mt} = y_{mt} - d_t \eta_{t+1} - s_{mt} - g_{mt}.
\end{equation}

Elderly consumption \( c_{et} \) is derived from three sources: returns on previous savings \( R_t s_{mt-1} \), earned income \( y_{et} \), and received gifts \( g_{et} \). Income \( y_{et} \) may include compensation for productive or advisory roles, government transfers such as social security, or traditional sharing of output in hunter-gatherer societies. Elderly consumption is given by:
\begin{equation} \label{oac}
    c_{et} = R_t s_{mt-1} + y_{et} + g_{et}.
\end{equation}

This model of intergenerational gift-giving also captures cultural transmission. We will demonstrate that the utility function in (\ref{lifetimeU}) aligns with a model where cultural goods of endogenously determined value are exchanged across generations.

\subsection{Simple model mechanics}

We begin with a simplified version of the model in which there is no savings ($s_{mt}=0$ in Equation (\ref{oac})). This simplification serves three purposes: it clarifies the intuition by focusing solely on the inculcation and gift decisions of current middle-aged agents; it empirically aligns with premodern economies where labor and natural resources, rather than capital accumulation, drive output—a characteristic of hunter-gatherer and subsistence agricultural societies; and it provides insight into the interplay between property rights, inculcation, and elderly treatment.

Substituting middle-aged consumption (\ref{mac}) and elderly consumption (\ref{oac}) into (\ref{lifetimeU}) and setting $s_{mt}=0$, we obtain:

\begin{equation} \label{nocapira}
U_{mt} = (1-\eta_t)\ln (y_{mt}-g_{mt}-d_t)+\eta_t \ln g_{mt} +\beta \ln \left(y_{et+1}+g_{et+1}\right)
\end{equation}

To simplify further, we assume $\eta_t \in {0,\eta}$ with $0<\eta<1$, so the current middle-aged generation makes a binary choice: whether to inculcate children to enjoy giving gifts to the elderly at intensity $\eta$ and cost $d_t$, or not. Simultaneously, they decide on a gift for the current elderly, based on the previous generation's inculcation. These assumptions allow us to model the problem as a stationary, infinite-horizon game where each generation sequentially decides on inculcation, anticipating that future generations will adhere to the subgame-perfect Nash equilibrium strategy.

Middle-aged utility in an inculcation equilibrium is:

\begin{equation} \label{nocapir}
U_{mt, \eta} = (1-\eta)\ln (y_{mt}-g_{mt}-d_t)+\eta \ln g_{mt} +\beta \ln \left(y_{et+1}+g_{et+1}\right)
\end{equation}

Here, $g_{et+1}$ is the expected gift in the next period, given that current youth are inculcated. Stability of this equilibrium can be analyzed by considering a deviation where a middle-aged agent opts out of inculcation, leading to no future gifts. In this case, the agent's lifetime utility becomes:

\begin{equation} \label{nocapnir}
U_{mt, 0} = (1-\eta)\ln (y_{mt}-g'{mt})+\eta \ln g'{mt}+\beta \ln (y_{et+1})
\end{equation}

When inculcation and gift-giving are maintained, optimal gifts are derived by differentiating (\ref{nocapir}) with respect to $g_{mt}$, yielding:

\begin{equation} \label{optsimpg}
g_{mt}^*=\eta (y_{mt}-d_t)
\end{equation}

Thus, the middle-aged agent gives a gift proportional to disposable income, defined as income remaining after inculcation. Using this, equilibrium gifts received, $g^{et}$, can be determined. To account for demographic changes, we assume a constant population growth rate $n$, such that $n{t+1}=(1+n)n_t$. In equilibrium, this implies $g^{et}=(1+n)g^*{mt}$.

If the middle-aged agent abandons inculcation, the final gift given is $g'{mt}=\eta y{mt}$, and no gift is received during old age in the subsequent period.

\subsubsection{Gifts and transfer of cultural goods}

Before continuing to analyze inculcative decisions, we show that the warm glow gift-giving model is equivalent to a model in which cultural goods are transferred between generations. This allows us to relate the importance of gifting in support of the elderly to a more general notion of cultural transmission and its importance in maintaining the elderly. 

Suppose that instead of enjoying giving gifts, the current middle-aged generation places value on consumption of a cultural good, for which a price $p_t$ must be paid. The demand for cultural goods is $x^d_{t}$ and the good is supplied by current elderly, each of whom has one unit of the good to supply. Lifetime utility (\ref{lifetimeU}) in these circumstances is:

\begin{eqnarray} \label{lifetimeU-dem}
    U_t =(1-\eta_t)\ln{c}_{mt} +  \eta_t \ln x^d_{t}+ \beta\ln c_{et+1}
\end{eqnarray}

Setting $s_{mt}=0$ and modifying consumption of middle-aged and elderly (\ref{mac}) and (\ref{oac}) accordingly gives (\ref{lifetimeU-dem}) as:

\begin{eqnarray} \label{lifetimeU-dem-rf}
    U_t =(1-\eta)\ln (y_{mt}-g_{mt}-p_tx_{dt})+\eta \ln x_{dt} +\beta \ln \left(y_{et+1}+x_{st+1}p_{t+1}\right)
\end{eqnarray}

In specification (\ref{lifetimeU-dem-rf}), $\eta$ now measures the relative importance of cultural consumption to material consumption. Current middle-aged agents spend $p_tx_{mt}$ on cultural goods, which gives them utility $\eta \ln x_{mt}$.   

The current middle-aged inculcate an interest in cultural goods in the current young. When cultural goods are more valuable to the young, as reflected in a higher value of $\eta_{t+1}$, they pay more for them, enriching the elderly.\footnote{This is similar to a model in which firms engage in advertising spending which changes the demand for a good as described in \citet{bagwell07}, for example.} We can now establish that this cultural goods model is observationally equivalent to the warm-glow gift-giving model under our assumptions. This leads us to Proposition \ref{equivalence}.

\begin{proposition}[Equivalence of gift-giving and cultural exchange]\label{equivalence} The models captured by utility functions (\ref{lifetimeU}) and (\ref{lifetimeU-dem}) are observationally equivalent:
\begin{itemize}
    \item \textbf{Gift-giving:} Middle-aged agents choose gifts to the elderly, and receive `warm-glow' utility from giving gifts.    \item \textbf{Cultural exchange:} Middle-aged agents spend $x^d_tp_t$ on cultural goods supplied by the elderly and earn utility from these goods.
\end{itemize} 

\end{proposition}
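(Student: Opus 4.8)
The plan is to prove observational equivalence by exhibiting a one-to-one correspondence between the equilibria of the two specifications under which every observable quantity—middle-aged consumption, elderly consumption, the intergenerational transfer, and the binary inculcation choice—is identical. Concretely, I would identify the per-capita transfer to the elderly across the two models: the warm-glow gift $g_{mt}$ of \eqref{lifetimeU} plays the role of the cultural-goods expenditure $p_t x^d_t$ of \eqref{lifetimeU-dem}. The task then reduces to checking that this identification maps optimal behavior to optimal behavior on both the paying and the receiving side, and that it leaves the inculcation incentives intact.

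First, on the demand side, I would compare the optimality conditions already derived. The optimal gift \eqref{optsimpg} satisfies $g^*_{mt}=\eta(y_{mt}-d_t)$, while the cultural demand \eqref{cultdemand} satisfies $p_t x^d_t=\eta(y_{mt}-d_t)$; under the identification these transfers coincide exactly. It follows that residual middle-aged consumption reduces to $(1-\eta)(y_{mt}-d_t)$ in either model, so the first two expenditure flows of a middle-aged agent are indistinguishable.

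Second, on the receiving side, I would impose market clearing in the cultural model and track the demographic wedge. Because each elder supplies a single unit while $(1+n)$ middle-aged agents demand per elder, clearing forces per-elder revenue to equal $(1+n)\,\eta(y_{mt+1}-d_{t+1})$, which is precisely $g^*_{et+1}=(1+n)g^*_{mt+1}$. Hence the elderly-consumption argument $y_{et+1}+p_{t+1}x_{st+1}$ matches $y_{et+1}+g^*_{et+1}$ term by term, completing the match of observable allocations.

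Finally, I would substitute the optimal choices back into \eqref{lifetimeU-dem-rf} and \eqref{nocapir} and compare indirect utilities. These agree except for the additive term $-\eta\ln p_t$, which a price-taking agent treats as a constant; since it enters the on-path payoff \eqref{nocapir} and the deviation payoff \eqref{nocapnir} identically, it cancels in the comparison $U_{mt,\eta}-U_{mt,0}$, so the equilibrium inculcation decision is preserved. I expect the main obstacle to lie in this receiving-side step: one must pin down the equilibrium price $p_t$ from market clearing and verify that the growth factor $(1+n)$ enters the cultural receipts in exactly the way it enters $g^*_{et}$, so that the two sides coincide rather than merely being proportional. The demand-side and indirect-utility calculations are the routine first-order arguments already present in the text; the substantive content is the mutual consistency of supply, demand, and demographic accounting.
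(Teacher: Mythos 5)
Your proposal is correct, and its core step is exactly the paper's proof: the paper disposes of the proposition in one line by observing that the optimal gift \eqref{optsimpg}, $g^*_{mt}=\eta(y_{mt}-d_t)$, coincides with the cultural-goods expenditure $p_t x^d_t$ implied by \eqref{cultdemand}. Where you go beyond the paper is in making the equivalence genuinely \emph{observational} rather than merely expenditure-matching. Your market-clearing step --- each elder supplies one unit against $(1+n)$ middle-aged demanders, pinning $p_t=(1+n)\eta(y_{mt}-d_t)$ and hence per-elder revenue $(1+n)g^*_{mt}=g^*_{et}$ --- verifies that elderly consumption, not just the middle-aged outlay, matches across the two models; the paper asserts this implicitly via its statement that ``equilibrium requires $p_t x^d_t = \eta(y_{mt}-d_t)$'' without tracking the demographic accounting. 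Likewise, your observation that indirect utilities differ only by the additive constant $-\eta\ln p_t$, which appears identically in the on-path payoff \eqref{nocapir} and the deviation payoff \eqref{nocapnir} and therefore cancels in $U_{mt,\eta}-U_{mt,0}$, shows the inculcation threshold of Proposition \ref{incymye} is invariant to the reinterpretation --- a point the paper's proof never addresses even though the inculcation decision is an observable. The one assumption you leave tacit is that a deviator's uninculcated children ($\eta_t=0$) generate zero cultural demand facing that deviator as an elder, so $p_{t+1}x_{st+1}=0$ mirrors the zero-gift punishment; this holds under the paper's dynastic structure but would need comment in an anonymous economy-wide market, where a deviating elder might still sell to other dynasties' inculcated children. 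In short: the paper buys brevity by checking only the payer's side; your argument buys a complete equilibrium correspondence (allocations, prices, and incentives) at the cost of a few extra routine calculations.
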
 
\begin{proof}[Proof:]
The gift-giving model resulted in an equilibrium gift given by current middle-aged agents proportional to current middle-aged disposable income, as in (\ref{optsimpg}). For the cultural goods model, differentiating (\ref{lifetimeU-dem-rf}) with respect to $x^d_t$ and solving gives the demand for $x^d_t$ as:

\begin{equation} \label{cultdemand}
    x_t^d=\eta\frac{y_{mt}-d_t}{p_t}
\end{equation}

Each elderly agent has a supply of one unit of cultural goods, so demand equal to supply requires $x_t^s=(1+n)x_t^d$ or $1=(1+n)x_t^d$. Equating demand from (\ref{cultdemand}) with supply and solving gives $p_t=(1+n)\eta(y_{mt}-d_t)$. It follows that $p_tx_t^d=\eta(y_{mt}-d_t)$ are the expenditures of middle-aged agents on cultural goods, while $p_{t+1}x^s_{t+1}=p_{t+1}=(1+n)\eta(y_{mt+1}-d_{t+1})$ is the amount received by the elderly, identical to the gifts implied by $g^*_{mt}$ in (\ref{optsimpg}) and the resulting gifts received by the elderly, $g_{et}^*$.
\end{proof}

The equivalence result in Proposition \ref{incymye} is useful in that it allows us to work with the easier-to-solve gift-giving model while also describing how cultural transmission interacts with elderly treatment. 

\subsubsection{Inculcation decisions}

We can now show that the inculcation decision depends on whether or not elderly income is small relative to middle-aged income. This result and the key income threshold are described in Proposition \ref{incymye}.

\begin{proposition}[Relative income and inculcation]\label{incymye} Inculcation occurs when elderly income is sufficiently small relative to middle-aged income, occurring when:
\begin{equation*}
\frac{y_{et+1}}{y_{mt+1}} \leq Y^*
\end{equation*}
where
\begin{equation*}
Y^*=\frac{(1+n)\eta(1-\delta)}{\left((1-\delta)^{-\frac{1}{\beta}}-1\right)}
\end{equation*}
\end{proposition}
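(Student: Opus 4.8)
The plan is to characterize the equilibrium condition for inculcation by comparing the on-path utility $U_{mt,\eta}$ in \eqref{nocapir} with the one-shot deviation utility $U_{mt,0}$ in \eqref{nocapnir}, and then to translate the resulting inequality into a bound on the income ratio $y_{et+1}/y_{mt+1}$. Inculcation is sustained as a stationary equilibrium precisely when no middle-aged agent wants to deviate, i.e.\ when $U_{mt,\eta}\geq U_{mt,0}$, so the entire argument reduces to evaluating both utilities at their optimized gift levels and simplifying.

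First I would substitute the optimal on-path gift $g_{mt}^{*}=\eta(y_{mt}-d_t)$ from \eqref{optsimpg} into \eqref{nocapir}. Middle-aged consumption then equals $(1-\eta)(y_{mt}-d_t)$ and the gift equals $\eta(y_{mt}-d_t)$, so the two warm-glow terms collapse to $(1-\eta)\ln(1-\eta)+\eta\ln\eta+\ln(y_{mt}-d_t)$. Doing the same for the deviation, where $g'_{mt}=\eta y_{mt}$ and no gift is received in old age, the first two terms collapse to the identical constant plus $\ln y_{mt}$. The key simplification is that the warm-glow constants $(1-\eta)\ln(1-\eta)+\eta\ln\eta$ are the same in both regimes and cancel, leaving the clean comparison
\begin{equation*}
\ln(y_{mt}-d_t)+\beta\ln\left(y_{et+1}+g_{et+1}^{*}\right)\;\geq\;\ln y_{mt}+\beta\ln y_{et+1}.
\end{equation*}

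Next I would express everything in income shares using the inculcation-cost specification $d_t=\delta y_{mt}$ (hence $d_{t+1}=\delta y_{mt+1}$), which is where the parameter $\delta$ enters the threshold. On the middle-aged side, $\ln y_{mt}-\ln(y_{mt}-d_t)=-\ln(1-\delta)$. On the elderly side, the equilibrium gift received is $g_{et+1}^{*}=(1+n)g_{mt+1}^{*}=(1+n)\eta(1-\delta)y_{mt+1}$ by the population-growth relation $g_{et}^{*}=(1+n)g_{mt}^{*}$, so the gift-to-income ratio is $(1+n)\eta(1-\delta)\,y_{mt+1}/y_{et+1}$. Rearranging the cancelled inequality gives
\begin{equation*}
\beta\ln\left(1+\frac{(1+n)\eta(1-\delta)\,y_{mt+1}}{y_{et+1}}\right)\;\geq\;-\ln(1-\delta),
\end{equation*}
and exponentiating the $\beta$-weighted logarithms yields $1+(1+n)\eta(1-\delta)\,y_{mt+1}/y_{et+1}\geq(1-\delta)^{-1/\beta}$. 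Solving this linear inequality for the ratio and inverting it produces exactly the threshold $Y^{*}$ in the proposition.

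The main obstacle I anticipate is bookkeeping rather than any deep difficulty: I must track timing so that the gift received when old uses next-period middle-aged income $y_{mt+1}$ and cost $d_{t+1}$, while the current inculcation cost uses $y_{mt}$, and I must handle the reciprocal step correctly. That step flips the inequality and requires knowing the sign of the denominator $(1-\delta)^{-1/\beta}-1$, which is strictly positive since $0<\delta<1$ and $\beta>0$; this same sign condition also guarantees that $Y^{*}$ is well-defined and positive, so the threshold characterization is meaningful.
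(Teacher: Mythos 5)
Your proposal is correct and follows essentially the same route as the paper: your ``clean comparison'' after the warm-glow constants cancel is precisely the condition $\Delta U_t \geq 0$ for the utility difference in equation \eqref{deltaexp2}, and your subsequent substitution of $d_t=\delta y_{mt}$, $d_{t+1}=\delta y_{mt+1}$ and rearrangement is exactly the paper's final step, just carried out explicitly. Your added remark that $(1-\delta)^{-1/\beta}-1>0$ (so the reciprocal step preserves the inequality and $Y^*$ is well-defined and positive) is a correct detail the paper leaves implicit.
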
 
\begin{proof}[Proof:]
Substitution of the gift in (\ref{optsimpg}) into (\ref{nocapir}), and substitution of the terminal gift $g'_{mt}=\eta y_{mt}$ into (\ref{nocapnir}) coupled with the observation that $g_{et+1}=(1+n)g_{mt+1}$ allows writing the difference in utilities along the equilibrium path with inculcation and the utility from discontinuing inculcation, $\Delta U_t=U_{mt,\eta}-U_{mt,0}$ as:
\begin{equation} \label{deltaexp} 
\Delta U_t=\ln (y_{mt}-d_t) - \ln y_{mt} + \beta \ln (y_{et+1} + (1+n)\eta (y_{mt+1}-d_{t+1})) - \beta \ln ( y_{et+1}) 
\end{equation}

$\Delta U_t$ in (\ref{deltaexp}) simplifies to:

\begin{equation} \label{deltaexp2} 
\Delta U_t=\ln\left(1-\frac{d_t}{y_{mt}}\right) + \beta \ln\left( 1 + (1+n)\frac{\eta (y_{mt+1}-d_{t+1})}{y_{et+1}}\right) 
\end{equation}
Using the assumption that $d_t$ is proportional to $y_{mt}$, so $d_t=\delta y_{mt}$, where $\delta \in (0,1)$ is a fixed positive constant, and substituting into (\ref{deltaexp2}) gives:
\begin{equation} \label{simpineq}
\Delta U_t \geq 0 \quad \textrm{if} \quad    \frac{y_{et+1}}{y_{mt+1}} \leq \frac{(1+n)\eta(1-\delta)}{\left((1-\delta)^{-\frac{1}{\beta}}-1\right)}=Y^*
\end{equation}

\end{proof}

Proposition \ref{incymye} defines a critical threshold for the \textit{relative income} earned by the elderly. If relative elderly income falls below the threshold $Y^*$, inculcation occurs in equilibrium. The threshold increases (making inculcation more likely) if population growth increases ($n \uparrow$); this is because there are more middle-aged agents for each elderly agent, resulting in more gifts received. If the effectiveness of inculcation increases ($\eta \uparrow$), if inculcation becomes cheaper ($\delta \downarrow$), and if old-age consumption becomes more important ($\beta \uparrow$), inculcation becomes more likely. When inequality (\ref{simpineq}) does not hold, middle-agers do not anticipate the need to inculcate current youth, as the income they earn when old generates sufficient income for sustenance. Also, the relatively low income of middle-agers does not produce sufficiently large gifts to warrant inculcation.  

We now add to the model a measure of the status or prestige of the elderly: the relative consumption of the elderly, $\frac{c_{et}}{c_{mt}}$. Using (\ref{optsimpg}) and (\ref{simpineq}), we write this consumption ratio as follows, accounting for the range of relative incomes for which inculcation and concurrent gift-giving occurs: 

\begin{equation} \label{Crat}
\begin{aligned}
    \frac{c_{et}}{c_{mt}}&=\frac{y_{et}}{y_{mt}}, &\frac{y_{et}}{y_{mt}} >Y^* \\
    &=\frac{y_{et}}{y_{mt}(1-\eta)(1-\delta)} + \frac{(1+n)\eta}{1-\eta}, & \frac{y_{et}}{y_{mt}} \leq Y^*
\end{aligned}
\end{equation}

Equation (\ref{Crat}) suggests that there may not be a direct link between inculcative activity and elderly treatment. The elderly may provide services or other productive labor so that $y_{et}$ is large relative to $y_{mt}$. The elderly will then be well treated, but this could be a result of the nature of production, not of inculcation. 

The ethnographic record highlights production and property rights as key factors shaping the elderly's economic roles, prompting us to extend the model accordingly.

\subsection{Production Technology} \label{ge}

Suppose that in each time period, society creates output using labor $L_t$ and resources $T$ that do not vary with time. We will refer to $T$ as land, but it could represent some other resource used to produce output. Output is produced via the constant-returns production function:

\begin{equation} \label{production_function}
    Y_t=F(L_t,T)
\end{equation}

Let $A_{it},i=m,e$ denote the effective labor endowments of middle-aged and elderly agents. Agents supply their labor inelastically, so total labor in (\ref{production_function}) is effective labor supplies multiplied by population levels:

\begin{equation} \label{els}
L_t=N_{mt}A_{mt}+N_{et}A_{et}, 
\end{equation}

The production function (\ref{production_function}) is constant-returns so it can be decomposed using Euler's theorem. Using $F_L=\frac{\partial F}{\partial L}$ and $F_T=\frac{\partial F}{\partial T}$, we may write:

\begin{equation} \label{eulerdecomp}
    Y_t=F(L_t,T)=F_LL_t+F_TT
\end{equation}

The decomposition in (\ref{eulerdecomp}) gives us a natural way to attribute output to factors of production; $F_LL$ is labor's contribution to output, while $F_TT$ is output created by fixed inputs. Similarly, the contribution of any agent's labor to output is $F_Ll_i$, so output created by the labor of different ages of agents is:

\begin{equation} \label{labinc}
    q_{mt}^L = A_{mt}F_L,\quad q_{et}^L = A_{et}F_L
\end{equation}

output attributable to the fixed resource is 
\begin{equation} \label{landincome}
    q_{t}^T = TF_T
\end{equation}

Property rights define how the output created in (\ref{labinc}) and (\ref{landincome}) is allocated to agents.  

\subsection{Property Rights}

In Section \ref{ethnosec}, we discussed Simmons' intriguing remarks on the role of property rights in supporting the elderly. While he describes property rights as `lifesavers' for the elderly \citep[p. 36]{simmons70}, he also notes that weaker property rights can sometimes benefit them, such as when food is shared communally. To clarify, we distinguish between property rights over inputs like land and property rights over output such as game acquired. Simmons appears to blur these distinctions in his ethnographic overview. Strong property rights over productive assets likely benefit the elderly by ensuring resources are invested in ownership or accumulation. Conversely, weak property rights over output, encouraging communal sharing, also seem advantageous for them. This input-output distinction aligns naturally with the Euler decomposition of the production function (\ref{eulerdecomp}).

Suppose that each agent has an ownership share of the fixed resource $\sigma_{it},i=m,e$ where $\sum \sigma_{it}=1$ across the population or, in our case, $\sigma_{et}N_{et}+\sigma_{mt}N_{mt}=1$. Then, a starting point to describe the output that an agent is responsible for creating is expression (\ref{zeroprop}):

\begin{equation} \label{zeroprop}
    q_{it}=q_{it}^L+q_{it}^T=A_{it}F_L+\sigma_{it} F_TT, \quad i=m,e; \quad \sum_N \sigma_{it}=1
\end{equation}

\subsubsection{Property rights over land}

The specification in (\ref{zeroprop}) naturally frames property rights over land (or other assets) in terms of $\sigma$. Consider a situation in which ownership of land is not defined at all, that is, land is a free-access resource. $\sigma_{it}$ should then be set so as to allow our model to mimic a typical common production model, in which agent $i$'s output depends upon $i$'s relative share of labor, so $q_{it}=q_{it}^L=\frac{l_i}{L}F(L,T)$.\footnote{This is a slight generalization of the idea that open access resources generate average product as returns for users as in, for example, \citet{gordon54}. See \citet{bakerconning24} for further discussion.}

Open-access land rights is then captured by $\sigma_{it}$ as follows:
\begin{equation} \label{freeaccesssig}
    \sigma^0_{it}=\frac{A_{it}}{L_t}=\frac{A_{it}}{N_{mt}A_{mt}+N_{et}A_{et}},\quad i=m,e
\end{equation}

To confirm that this is in accordance with a typical common property, open access model, note that we can combine agents' output (\ref{zeroprop}) with $\sigma_{it}^0$ in (\ref{freeaccesssig}) to get:
\begin{equation} \label{comproof}
q_{it} = A_{it}F_L + \frac{A_{it}}{L_t}F_tT = \frac{A_{it}}{L_t}\left(F_LL_t +F_tT\right)=\frac{A_{it}}{L_t}F(L_t,T)
\end{equation}

Where the last part of (\ref{comproof}) follows from the assumption that $F(L_t,T)$ is homogeneous of degree one. This pins down one side of a spectrum of possible degrees of ownership rights over land, with $\sigma_{it}^0$ in (\ref{freeaccesssig}) corresponding to a situation in which there are no land property rights.

When property rights are fully defined over land, suppose that a right of first possession determines ownership. This can be approximated by the elderly controlling assets in a way proportional to past usage (that is, their usage during middle age), so that when asset rights are fully defined, $\sigma_{et}=\frac{A_{mt-1}}{A_{mt-1}N_{mt-1}}=\frac{1}{N_{et}}$. It follows that in this case the middle-aged do not have any property rights, so $\sigma_{mt}=0$. Fully defined property rights over land are then characterized by:
\begin{equation} \label{propsig}
    \sigma^1_{et} = \frac{1}{N_{et}},\quad \sigma_{mt}^1=0
\end{equation}

In (\ref{freeaccesssig}) and (\ref{propsig}), we have described free access and fully defined rights over land using values for $\sigma_{it}$ as $\sigma_{it}^0$ and $\sigma_{it}^1$. A spectrum of intermediate possibilities exists between the two endpoint cases, and we introduce a parameter $\phi_T \in [0,1]$ to capture the degree of definition of property rights over land (or other fixed assets). This results in values the following values for $\sigma$: 

\begin{equation} \label{elderlyland}
    \sigma_{et} = \phi_T\frac{1}{N_{et}}+(1-\phi_T)\frac{A_{et}}{L_t}
\end{equation}

\begin{equation} \label{middleland}
    \sigma_{mt} = (1-\phi_T)\frac{A_{mt}}{L_t}
\end{equation}

When $\phi_T=0$ in (\ref{elderlyland}) and (\ref{middleland}) there are no ownership rights in land, while when $\phi_T=1$ property rights are completely defined.  

\subsubsection{Property rights over output}
We model imperfect rights over output as the loss of some share of $q$ to a general output fund, which is then shared equally among all agents. Let $\phi$ denote the share of own output retained by the agent so that $1-\phi$ of each agent's output is contributed to the output pool. Then we can write the total income received by an agent as a function of the degree of definition of property rights as captured by different values of $\phi$ and $\phi_T$, assuming that each agent gets an equal share of the total output shared with others:
\begin{equation} \label{outshare}
    y_{it}= \phi q_{it} + (1-\phi)\frac{\sum_{j=1}^{N_t}q_{jt}}{N_t} \quad i=m,e
\end{equation}
Substituting into (\ref{outshare}) our expressions for outputs in (\ref{zeroprop}) gives:
\begin{equation} \label{agentincome}
    y_{it} = \phi \left(A_{it}F_L + \sigma_{it}F_TT\right)+(1-\phi)\frac{F(L_t,T)}{N_t} \quad i=m,e
\end{equation}

In (\ref{agentincome}) we see that imperfect ownership of output removes a bit of output from each agent but entitles an agent to a share of average output. 

\subsubsection{Property rights in general}

A stylized fact is that property rights tend to develop in tandem;  ownership of resources and the output created by those resources seem to evolve in the same direction.\footnote{Developing a theory of the development of property rights is beyond the scope of this paper. For a recent review and one modeling approach, see \citet{bakerconning24}. In particular see \citet{baker2008} for a model in which rights over assets emerge endogenously, and \citet{bakerswope21} for a model in which ownership of resources and output-sharing are interrelated.} We model this fact by assuming that $\phi_T$ and $\phi$ are linked. To allow for different kinds of property rights to develop at different rates yet still be linked, we introduce $\rho$ as a parameter characterizing the relative speed at which ownership of output and inputs co-evolve and then specify that $\phi_T=\phi^\rho$. When $\rho>1$, $\phi_T=\phi^\rho<\phi$, and rights over output are more developed than rights over assets for any interior value of $\phi$. Alternatively, if $\rho<1$, the opposite is true: rights over assets are more developed than rights over output, as $\phi_T>\phi$. 

\subsubsection{Back to relative treatment of the elderly}
We now can relate the development of property rights to inculcation, treatment of the elderly, and cultural transmission. Substituting (\ref{elderlyland}) and (\ref{middleland}) into (\ref{agentincome}), and forming the ratio $\frac{y_{et}}{y_{mt}}$ gives:

\begin{equation} \label{incratfull}
   \frac{y_{et}}{y_{mt}} = \frac{\phi\left(A_{et}F_L+\left[ \frac{A_{et}(1-\phi^\rho)}{L_t}+\frac{\phi^\rho}{N_{et}}\right]F_TT\right)+(1-\phi)\frac{F(L,T)}{N_{et}+N_{mt}}}{\phi \left(A_{mt}F_L+\frac{A_{mt}(1-\phi^\rho)}{L_t}F_TT\right)+(1-\phi)\frac{F(L,T)}{N_{et}+N_{mt}}}
\end{equation}

Assume that the aggregate production function is Cobb-Douglas: $F(L,T)=L^\alpha T^{1-\alpha}$. Then, expression (\ref{incratfull}) becomes:

\begin{equation} \label{incratcd}
   \frac{y_{et}}{y_{mt}} = \frac{\phi\left(A_{et}\alpha + \left[ A_{et}(1-\phi^\rho)+\frac{\phi^\rho L_t}{N_{et}}\right](1-\alpha) \right)+(1-\phi)\frac{L_t}{N_{et}+N_{mt}}}{\phi \left(A_{mt}\alpha + A_{mt}(1-\phi^\rho)(1-\alpha) \right)+(1-\phi)\frac{L_t}{N_{et}+N_{mt}}}
\end{equation}

Expression (\ref{incratcd}) links relative elderly income, technology of production, and property rights. It provides direct insight into how well the elderly are treated given the fundamental characteristics of the economy on its own, but it also functions as the driving variable in whether or not we expect to see inculcation and a cultural factor in the treatment of the elderly as it appears in the inequality (\ref{simpineq}) in Proposition \ref{incymye}. When $\phi=0$, (\ref{incratcd}) collapses to unity. This is because when all output is shared equally, elderly and middle-aged are the same. In Proposition \ref{relincpr}, we collect the key results concerning relative income and the definition of property rights embodied in (\ref{incratcd}).

\begin{proposition}[Property rights and the relative income of the elderly]\label{relincpr} The relative income of the elderly, $\frac{y_{et}}{y_mt}$ is U-shaped in the degree of definition of property rights $\phi$, provided a) production is not too labor-intensive ($\alpha$ is not too large), and b) provided $A_e<A_m$; the elderly have a relatively smaller endowment of labor.  
\end{proposition}
\begin{proof}[Proof:]
In what follows, the total population is $N_{et}+N_{mt}=N_t$. Taking the derivative of relative income $\frac{y_{et}}{y_{mt}}$ in (\ref{incratcd}) and simplifying gives:
\begin{equation} \label{dincratsimp}
    \frac{\partial \left(\frac{y_{et}}{y_{mt}}\right)}{\partial \phi}=\frac{N_tL_t\left(A_{et}N_{et} +A_{mt}(N_t\phi^\rho(1-\alpha)(1+\rho)-N_{et})\right)}{N_{et}\left((1-\alpha)A_{mt}N_t\phi^{\rho+1}-(A_{mt}-A_{et})N_{et}\phi-L_t\right)^2}
\end{equation}
Evaluating (\ref{dincratsimp}) at $\phi=0$ gives:
\begin{equation} \label{incratp0}
\frac{\partial \left(\frac{y_{et}}{y_{mt}}\right)}{\partial \phi}= (A_{et}-A_{mt}) \frac{N_t}{L_t}
\end{equation}
Which is negative so long as $A_{et}<A_{mt}$. Therefore, the relative income ratio starts at $1$ and initially decreases in $\phi$. The sign of the derivative in (\ref{dincratsimp}) thereafter depends wholly on the expression in the numerator of (\ref{dincratsimp}):
\begin{equation} \label{numerexp}
    A_{et}N_{et} +A_{mt}(N_t\phi^\rho(1-\alpha)(1+\rho)-N_{et})
\end{equation}

This expression is increasing in $\phi$ and equals zero at a critical value of $\phi$, $\phi^*$, given by:
\begin{equation}
\phi^*=\left(\frac{(A_{mt}-A_{et})N_t}{A_{mt}N_t(1-\alpha)(1+\rho)}\right)^\frac{1}{\rho}
\end{equation}
As the derivative in equation (\ref{incratp0}) is initially negative, equals zero at a unique point $\phi^* \in (0,1)$ and then becomes positive, relative income is U-shaped in the degree to which property rights are developed, $\phi$.
\end{proof}

Proposition \ref{relincpr} starts with the assumption that $A_{et}<A_{mt}$, which means that an elderly agent provides fewer effective units of labor than a middle-aged agent. Another assumption is that $\alpha$ is not very large; if it were $\phi^*$ would be outside the $(0,1)$ range and relative elderly income would only decreases in $\phi$. When $\phi=1$ we have:

\begin{equation} \label{incratp1}
\frac{y_{et}}{y_{mt}}=\frac{1}{\alpha}\frac{A_{et}}{A_{mt}}+\frac{1-\alpha}{\alpha}\frac{N_{mt}}{N_{et}}
\end{equation}

The limit ratio in (\ref{incratp1}) can be greater than one or less than one, depending on the relative elderly population and their relative productivity. For small enough $\alpha$ (that is, production is resource intensive), the ratio in (\ref{incratp1}) can be greater than one even if $A_{et}=0$ (the elderly do not supply any material labor). Then, the elderly can achieve a high level of prestige measured by relative income even in the absence of inculcation. This is because the elderly have accumulated assets in middle age. 

\begin{figure}[htbp] 
\begin{center}
\includegraphics[width=.7\textwidth]{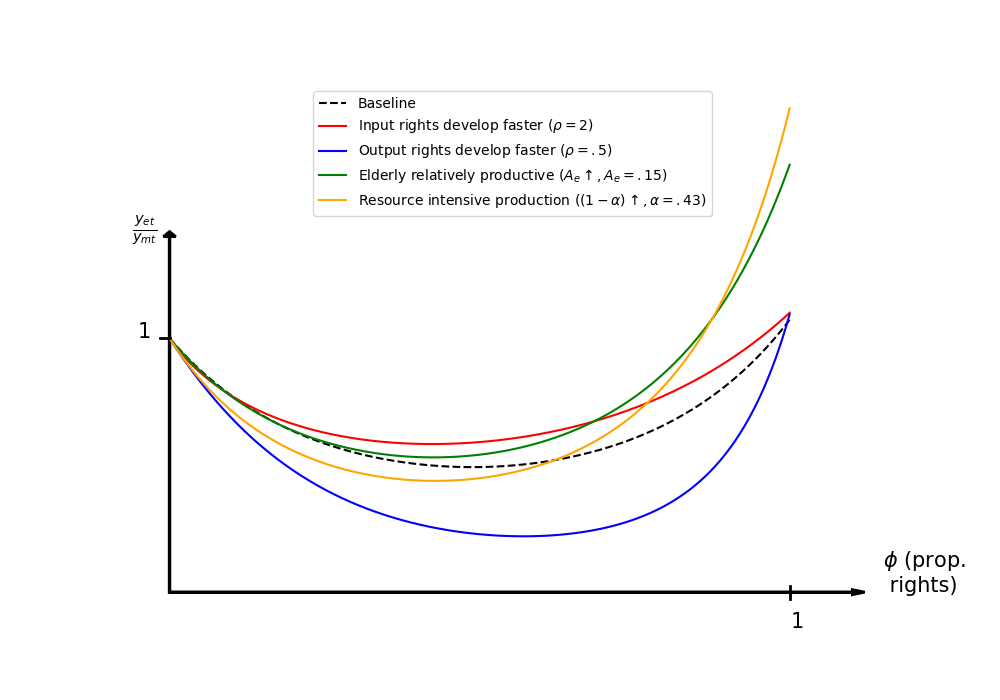}
\caption{Relative consumption of the elderly ($\frac{c_{et}}{c_{mt}}$), as functions of the development of property rights in output and productive resources from nonexistent ($\phi=0$) to complete ($\phi=1$). Baseline parameter values of $A_m=1$, $A_e=0.025$, and $\alpha=0.5$.} \label{samplerats}
\end{center} 
\end{figure}

\begin{figure}[htbp] 
\begin{center}
\includegraphics[width=.7\textwidth]{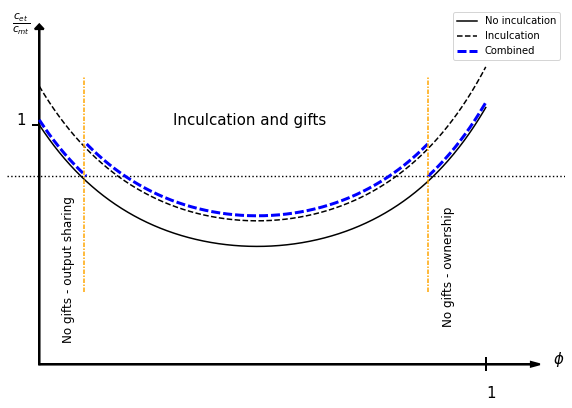}
\caption{Relative consumption of the elderly ($\frac{c_{et}}{c_{mt}}$), as a function of the degree to which property rights are developed, $\phi$, with endogenous inculcation. Baseline parameter values of $A_m=1$, $A_e=0.025$, and $\alpha=0.5$.} \label{consumrat}
\end{center} 
\end{figure}
Figure \ref{samplerats} plots the relative income ratio (\ref{incratcd}) as a function of property rights development, $\phi$, under varying assumptions about production. While model parameters can shift the curve's position, its overall shape remains consistent. The figure aligns with the empirical evidence discussed earlier. It suggests that in relatively egalitarian societies, cultural protections for the elderly may be unnecessary. However, as property rights initially develop, downward pressure on the relative consumption of the elderly creates demand for mechanisms like cultural inculcation to ensure their livelihood. For example, cultural goods or food taboos can transfer current income into future consumption. As resource ownership expands, asset income begins to benefit the elderly directly, reducing the need for such institutions, even as their prestige continues to rise.

Figure \ref{consumrat} combines results from Propositions \ref{incymye} and \ref{relincpr}. As property rights develop ($\phi$ increases) and the economy moves rightward along the $x$-axis, the relative consumption ratio $\frac{c_{et}}{c_{mt}}$ (equivalent to $\frac{y_{et}}{y_{mt}}$ in this range) initially declines, crossing the $Y^*$ threshold from Proposition \ref{incymye} at the vertical dashed line. At this point, inculcation begins, encouraging young people to support the elderly, mitigating the decline in relative treatment. However, once property rights become sufficiently advanced, inculcation ceases, as shown in the figure. This analysis highlights the complexity of linking culture, inculcation, and elderly prestige, suggesting a nonlinear relationship between culture, production, and elderly treatment.

\subsection{Savings, Capital Accumulation, and Inculcation}

We now analyze a version of the model where assets are accumulated through conscious savings decisions, exploring how this interacts with inculcation and gifts to the elderly. This approach also connects inculcative investment with savings and aligns the exposition with classical growth models featuring technological progress. 

The optimization problem of a current middle-aged agent can be written as the problem of maximizing $U_t$ in (\ref{lifetimeU}) subject to constraints (\ref{mac}) and (\ref{oac}), as follows:

\begin{equation} \label{maxprob}
\begin{aligned}
\max_{g_t,s_{mt},\eta_{t+1}} \quad & (1-\eta_t)\ln{c}_{mt} +  \eta_t \ln g_{mt}+ \beta\ln c_{et+1} \\
\textrm{s.t.} \quad & \quad c_{mt}=y_{mt}-d_t\eta_{t+1}-s_{mt}-g_{mt},  \\ 
& \quad c_{et+1} = s_{mt}R_{t+1}+(1+n)g_{et+1}+y_{et+1} \\
\end{aligned}
\end{equation}

The solution of (\ref{maxprob}) presents two challenges. First, a solution must take into account that current middle-aged agents' future gifts depend on current inculcative investment through the parameter $\eta_{t+1}$, and in an equilibrium solution to the model, beliefs about this dependence must be correct. Second, a solution must admit the possibility of solutions in which $\eta^*_t=0$. The subgame-perfect Nash equilibrium solution to the problem (\ref{maxprob}) has a structure in which agents give gifts in proportion to the present discounted value of lifetime income, with this proportion depending on $\eta_t$. For the time being, we have bypassed the issue of the determination of $\eta_t^*$ and skip to the form of the gift-giving function, which is given in Proposition \ref{giftfunny}:

\begin{proposition}[Gifts with accumulation]\label{giftfunny} If $\eta_t>0$ in equilibrium, gifts $g^*_{mt}$ are given by:
\begin{equation} \label{reactfun}
    g^*_{mt}=\frac{\eta_t}{1+\beta}\left(y_{mt}+\frac{y_{et+1}}{R_{t+1}}\right)
\end{equation}
\end{proposition} 

\begin{proof}[Proof:] See Appendix \ref{appa}.
\end{proof}

Proposition \ref{giftfunny} shows how equilibrium gifts and therefore, by analogy to Proposition \ref{equivalence}, cultural transmission, are proportional to lifetime income. Using Proposition \ref{giftfunny}, we can then begin to solve the maximization problem in (\ref{maxprob}). Two aspects of this solution are highlighted in Proposition \ref{twofeatures}.

\begin{proposition}[Total savings and gross returns on savings]\label{twofeatures} If $\eta_t>0$ in equilibrium, middle-aged agents' aggregate savings - the sum of inculcative investments and conventional savings - is described by:
\begin{equation} \label{fulleq1}
s_{mt}+\eta_{t+1}d_t=\frac{\beta y_{mt}-\frac{y_{et+1}}{R_{t+1}}}{1+\beta}
\end{equation}
In equilibrium, the gross rate of return on conventional savings is given by:
\begin{equation} \label{fulleq2}
    R_{t}=\frac{1+n}{1+\beta}\frac{y_{mt}+\frac{y_{et+1}}{R_{t+1}}}{d_{t-1}}
\end{equation}
\end{proposition} 
\begin{proof}[Proof:] See Appendix \ref{appb}.
\end{proof}

Equation (\ref{fulleq1}) in Proposition \ref{twofeatures} describes the total amount middle-aged agents wish to allocate to future consumption. Expenditures on savings and inculcation add up to the fraction of income a middle-aged agent wishes to save, which depends upon current income less discounted future income. Equation (\ref{fulleq2}) in Proposition \ref{twofeatures} requires that the rates of return on the different means of transferring current wealth to the future in (\ref{fulleq1}) be equilibrated. The rate of return on inculcative investments and conventional savings must be equal; otherwise, middle-aged agents would not employ both savings instruments. As we maintain the assumption that $d_t$ is proportional to income ($d_{t}=\delta y_{mt}$), we can interpret (\ref{fulleq2}) to mean that returns to assets such as capital must be equal in equilibrium to the unit return on inculcative investment. 

The question remains whether an equilibrium with gifts and inculcation is sustainable; i.e., whether $\eta_t>0$ in equilibrium. We need to round out the model to see how $\eta_t$, $s_t$, and $R_t$ depend on the underlying nature of the economy, as they are quantities determined in equilibrium and require specification of the production side of the economy. 

\subsubsection{Production and Economic Growth}

We again use a Cobb-Douglas production technology in which production occurs according to the aggregate production function $Y_t=L_t^\alpha K _t^{1-\alpha}$, where $K_t$ now denotes the total capital stock and as before $L_t$ denotes total effective labor supply:
\begin{equation} \label{effls}
    L_t=A_{mt}N_{mt}+A_{et}N_{et}
\end{equation}
We assume that $A_{it},i=e,m$ grows at a constant exogenous rate $a$, while the population grows at a rate $n$. Constant population growth means that $N_{mt}=(1+n)N_{et}$, and that $\tfrac{1}{1+n}$ is a measure of the relative size of the elderly population. 

We use (\ref{effls}) and the production technology to obtain the following expressions for middle-aged labor income, elderly labor income, and the gross rate of return on capital, assuming that all resources are paid their marginal products, where $k_t=\frac{K_t}{L_t}$ is capital per effective unit of labor:

\begin{equation} \label{wagesrates}
\begin{aligned}
y_{mt} &=& \alpha A_{mt}\left(\frac{K_t}{L_t}\right)^{1-\alpha} &=&\alpha A_{mt}k_t^{1-\alpha} \\
y_{et} &=& \alpha A_{et}\left(\frac{K_t}{L_t}\right)^{1-\alpha} &=&\alpha A_{et}k_t^{1-\alpha} \\
R_t    &=& (1-\alpha)\left(\frac{K_t}{L_t}\right)^{-\alpha} &=&(1-\alpha)k_t^{-\alpha}
\end{aligned}
\end{equation}

We also define some new terms:
\begin{eqnarray}
\tau_{et}&=\frac{A_{et}}{A_{mt}}&=\textrm{relative effective labor endowment of an elderly agent} \nonumber \\
\gamma_{et}&=\frac{N_{et}A_{et}}{L_t}&=\textrm{relative aggregate labor supply of the elderly} \label{taugam}
\end{eqnarray}

Given our assumptions of a constant rate of technological progress, we can omit time subscripts for the terms introduced in (\ref{taugam}) and write $\tau_e$ and $\gamma_e$. One can also show that $\tau_e$ and $\gamma_e$ are related via $\gamma_e = \frac{\tau_e}{\tau_e+(1+a)}$. 

Application of factor returns in (\ref{wagesrates}) to (\ref{fulleq1}) and (\ref{fulleq2}) yields Proposition \ref{dynolaws}, which translates Proposition \ref{twofeatures} into a pair of dynamic equations that describe the evolution of the capital stock and inculcation:

\begin{proposition}[Dynamics of inculcation and savings]\label{dynolaws}

In equilibrium, Proposition \ref{twofeatures} implies that the effective capital stock evolves over time according to:
\begin{equation} \label{fulleq1ge}
(1-\alpha)k_t^{-\alpha} = \frac{(1+n)(1+a)}{(1+\beta)\delta}\left(\left(\frac{k_t}{k_{t-1}}\right)^{1-\alpha}+\frac{\tau_e(1+a)}{1-\alpha}\frac{k_{t+1}}{k_{t-1}^{1-\alpha}}\right)
\end{equation}
while the inculcative parameter $\eta_t$ evolves over time according to:

\begin{equation} \label{fulleq2ge}
\eta_{t+1} = \frac{\beta}{(1+\beta)\delta} - \frac{k_{t+1}}{k_{t}^{1-\alpha}}\left(\frac{\tau_e(1+a)}{(1-\alpha)(1+\beta)\delta}+\frac{(1+n)(1+a)}{\alpha\delta(1-\gamma_e)}\right)
\end{equation}
\end{proposition}

\begin{proof}[Proof: ]
See Appendix \ref{appd}.
\end{proof}

Equation (\ref{fulleq2ge}) describes the dynamics of inculcation but also determines whether there will be any inculcative investment at all in equilibrium. When the right-hand side of (\ref{fulleq2ge}) is nonpositive, $\eta_{t+1}=0$ in equilibrium. This possibility depends on the rates of population growth and technological progress; in particular, a higher rate of technological progress makes $\eta=0$ more likely. Higher growth rates increase the demand and hence returns on capital, making inculcation less desirable. Also making $\eta=0$ more likely in equilibrium are higher inculcative costs, as measured by $\delta$, and greater direct participation in the economy by the elderly (larger values for both $\tau_e$ and $\gamma_e$). 

A value of the relative consumption ratio of the elderly $\frac{c_{et}}{c_{mt}}$ can be found using the equilibrium values of savings, inculcation, and gifts. In Appendix \ref{appd}, we describe the calculations that lead to Proposition \ref{finalrc}:

\begin{proposition}[Relative consumption, inculcation, and growth]\label{finalrc} In equilibrium, when $\eta_t^*>0$, the relative ratio of elderly to middle-age consumption is:
\begin{equation} \label{eqconsratrf}
    \frac{c_{et}}{c_{mt}}=\frac{\beta R_t}{1-\eta_t}\frac{y_{mt-1}+\frac{y_{et}}{R_t}}{y_{mt}+\frac{y_{et+1}}{R_{t+1}}}
\end{equation}
Otherwise, relative consumption is simply determined by relative incomes; that is
\begin{equation}
\frac{c_{et}}{c_{mt}}=\frac{y_{et}+R_ts_{t-1}}{y_{mt}}
\end{equation}
\end{proposition}
\begin{proof}[Proof: ] 
See Appendix D.
\end{proof}

Proposition \ref{finalrc} indicates that the relative consumption ratio depends on relative lifetime incomes, and therefore the overall growth rate of the economy. The proposition also suggests that equilibrium asset returns, $R$, and the magnitude of $\eta$ are important in driving relative consumption. This is particularly true along a balanced growth path, as in a balanced growth equilibrium, incomes grow at a rate $a$, and asset returns are constant. Along a balanced growth path, (\ref{eqconsratrf}) reduces to:
\begin{equation} \label{eqconsratbgss}
       \frac{c_{e}}{c_{m}}=\frac{\beta R}{(1-\eta)(1+\alpha)}
\end{equation}
Equation (\ref{eqconsratbgss}) therefore indicates that parameters such as $\alpha$, the labor intensity of production, $n$, the growth rate of population, and $\delta$, the per-unit-income costs of inculcation will influence the relative consumption of the elderly by exerting an impact on the level of inculcation $\eta$ and equilibrium asset returns $R$. Of course, $\eta$ and $R$ are somewhat complicated functions of model parameters, as discussed in Section \ref{fullsol}. 

\subsection{Analysis with accumulation}

The rest of the modeling section of the paper consists in delving into the implications of Propositions \ref{twofeatures}, \ref{dynolaws}, and \ref{finalrc}. We first solve a simplified version of the model in which all elderly income derives from asset accumulation, and $y_e=0$ at every $t$, which is equivalent to assuming that $\tau_e=\gamma_e=0$, so the elderly do not contribute any productive labor and only provide previously accumulated capital to the economy. 

\subsubsection{Asset income as the sole source of elderly income}

When $y_{et}=0$, the equilibrium gift function (\ref{reactfun}) reduces to:
\begin{equation} \label{simpgift}
    g_{mt}^*=\frac{\eta_t}{1+\beta}y_{mt}
\end{equation}

And the rate equilibration condition (\ref{fulleq2}) is:

\begin{equation} \label{eqincpathsim}
R_{t}=\frac{1+n}{1+\beta}\frac{y_{mt}}{\delta y_{mt-1}}
\end{equation}

The sum of cultural investment and asset savings (\ref{fulleq1}) becomes:
\begin{equation} \label{fulleq2a}
s_{mt}+\eta_{t+1}d_t=\frac{\beta}{1+\beta} y_{mt}
\end{equation}

Equations (\ref{simpgift}, \ref{eqincpathsim} and \ref{fulleq2a}) reflect that when there is no elderly labor income, the decisions of current middle-ageds reduce to allocating shares of income between overall savings, gift-giving and inculcation, in such a way that the returns between inculcative investment and asset accumulation are equilibrated. 

The general equilibrium aspects of the model can be deduced by substituting $\tau_e=0$ into (\ref{fulleq1ge}) in Proposition \ref{dynolaws}, which gives a simple recursion for the stock of capital per effective unit labor:
\begin{equation} \label{sskrel}
    k_{t+1}=\frac{(1-\alpha)(1+\beta)\delta}{(1+n)(1+a)}k_t^{1-\alpha}
\end{equation}

Using $\gamma_e=0$ and $\tau_e=0$ in (\ref{fulleq2ge}) in Proposition \ref{dynolaws} and rearranging results in another recursion for the effective capital stock:

\begin{equation} \label{olgrel}
    k_{t+1}=\frac{\alpha}{(1+a)(1+n)}\left(\frac{\beta}{1+\beta}-\delta\eta_{t+1}\right)k_t^{1-\alpha}
\end{equation}

Equation (\ref{olgrel}) shows how the possibility of inculcation and gifts impacts the canonical overlapping-generations model with population and technological progress, as developed in \citet{acemoglu09}. But for the $\delta\eta_{t+1}$ term in (\ref{olgrel}), the accumulation condition is a standard capital accumulation condition in growth models, so the possibility of inculcative investment reduces the amount of capital per effective unit of labor, which will increase equilibrium asset return rates. Dividing (\ref{sskrel}) by (\ref{olgrel}) and solving the result for $\eta_{t+1}$ gives a time-independent solution for the inculcative parameter $\eta$:

\begin{equation} \label{etasta}
    \eta_{t+1}^*=\eta^*=\frac{\beta}{(1+\beta)\delta}-\frac{1-\alpha}{\alpha}(1+\beta)
\end{equation}

Equation (\ref{etasta}) indicates that a larger $\delta$ reduces the equilibrium size of $\eta^*$. An increase in the relative intensity of capital in production, measured by $\frac{1-\alpha}{\alpha}$, also negatively impacts $\eta$. Societies that employ more capital-intensive means of production, other things equal, will have lower degrees of inculcation.  

If parameters are configured so that $\eta^* \leq 0$, in equilibrium $\eta=0$, and there is no inculcative investment. The elderly ensure their livelihood as they would in the typical overlapping generations model: through prior accumulation of productive assets. Setting (\ref{etasta}) to zero and solving for $\alpha$ gives an interpretation of (\ref{etasta}) as a condition for inculcation to occur in terms of the relative capital intensity of production:

\begin{equation} \label{inculcdec}
\frac{1-\alpha}{\alpha} \leq \frac{\beta}{\delta(1+\beta)^2}
\end{equation}

This threshold depends positively on the costs of inculcation, and on $\beta$, which influences the relative weight placed on consumption in old age in utility. If (\ref{inculcdec}) is not satisfied and no inculcation occurs, we recover the law of motion for the capital stock by using $\eta=0$ in (\ref{olgrel}). In forming the relative ratio of consumption, we find it most straightforward to solve the model in terms of $R$, equilibrium asset returns, rather than in terms of $k$, the equilibrium per unit of labor capital stock. We find a steady-state value of $k$, $\overline{k}$, and then use this to compute the balanced-growth rate of asset returns  $\overline{R}=(1-\alpha)\overline{k}^{-\alpha}$. 

When no inculcation occurs (that is, (\ref{inculcdec}) is not satisfied), equilibrium returns can be found by using $\eta=0$ in (\ref{olgrel}). Writing the resultant expression in terms of $\overline{R}$ gives:

\begin{equation} \label{noincssk}
    \overline{R} = (1-\alpha)\frac{(1+a)(1+n)(1+\beta)}{\beta\alpha}
\end{equation}

When (\ref{inculcdec}) is satisfied, we find $\eta^*$ as given in (\ref{etasta}), and equilibrium asset returns follow from (\ref{olgrel}). These are:
\begin{equation}
    \overline{R} = \frac{(1+a)(1+n)}{(1+\beta)\delta}
\end{equation}

Collecting all of these results gives us a relative consumption ratio that depends on whether or not inculcation and gift giving will occur in equilibrium, which can be obtained by using equilibrium values of $\overline{R}$ and $\eta$ in Equation (\ref{eqconsratbgss}):

\begin{equation} \label{cecmssrat}
\begin{aligned}
\frac{c_e}{c_m} =&\frac{1-\alpha}{\alpha}(1+\beta)(1+n), & \eta^*=0\\ & \frac{\beta(1+n)}{\delta(1+\beta)(1+\frac{1-\alpha}{\alpha}(1+\beta))-\beta}, 
                & \eta^*>0
\end{aligned}
\end{equation}

Population growth increases the relative well-being of the elderly, whether or not gifts are involved. In a high-capital-intensity setting without gift-giving or inculcation, this occurs because population growth raises the relative scarcity of capital, boosting returns on accumulated assets and benefiting the elderly. In cases with gift-giving, this effect is amplified by the greater number of young people, leading to more gifts being distributed to the elderly.  

The capital intensity of production, \(\frac{1-\alpha}{\alpha}\), significantly influences the relative treatment of the elderly. As in the model from Section \ref{ge}, more labor-intensive production tends to lower the relative treatment of the elderly. However, cultural practices such as inculcation and acculturation emerge to mitigate this decline when income primarily depends on labor rather than accumulated assets, providing a way to transfer income to the elderly. Conversely, in more asset-intensive production, the elderly's relative treatment improves due to asset income, but cultural acculturation to support the elderly diminishes.  

Figure \ref{capintense} illustrates this dynamic, showing the relative consumption of the elderly, \(\frac{c_e}{c_m}\), from (\ref{cecmssrat}) as a function of capital intensity. The figure reveals a nonlinear, U-shaped relationship between elderly treatment and capital intensity, similar to the pattern seen in the property-rights model of Section \ref{ge}.

\begin{figure}[htbp] 
\begin{center}
\includegraphics[width=.75\textwidth]{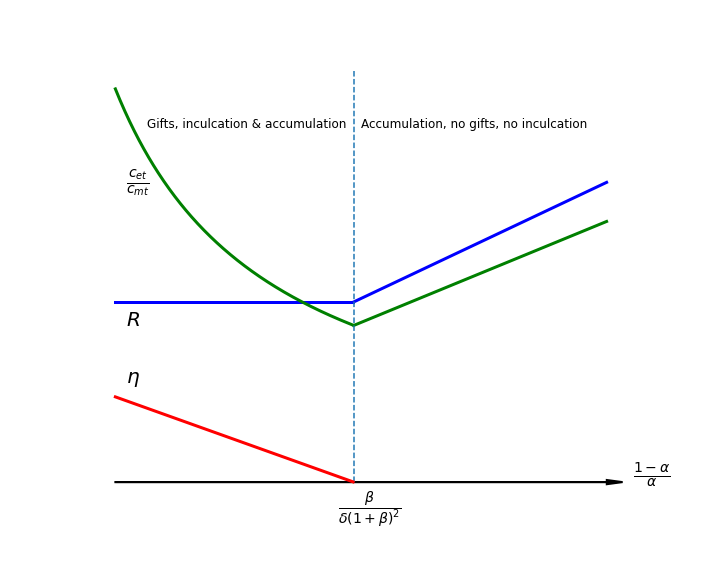}
\caption{Relative consumption of the elderly ($\frac{c_{et}}{c_{mt}}$), as a function of the capital intensity of production, $\frac{1-\alpha}{\alpha}$. } \label{capintense}
\end{center} 
\end{figure}

\subsubsection{Full model solution} \label{fullsol}

When the elderly can save and also have a sources of labor income, the model is more complex because elderly labor income provides sustenance and also raises lifetime income. This makes it difficult to accurately state how changes in model parameters influence inculcation, accumulation, and relative treatment of the elderly. With $\tau_e,\gamma_e>0$, the steady-state relationships corresponding to (\ref{fulleq1ge}) and (\ref{fulleq2ge}) in Proposition \ref{dynolaws} can be written as:
\begin{equation} \label{Rwithfu}
R = \frac{(1+n)(1+a)}{(1+\beta)\delta}\left(1 +\frac{(1+a)\tau_{e}}{ R}\right)
\end{equation}
with
\begin{equation} \label{etawithfu}
\eta = \frac{\beta}{\delta(1+\beta)}-\frac{(1+a)}{\delta R}\left(\frac{\tau_e}{1+\beta}+\frac{(1+n)}{(1-\gamma_e)}\frac{1-\alpha}{\alpha}\right)
\end{equation}

Solving (\ref{Rwithfu}) for the equilibrium asset return rate gives:

\begin{equation}
R = \frac{(1+n)(1+a)}{2\delta(1+\beta)}\left(1+\sqrt{1+\frac{4(1+\beta)\delta}{(1+n)}r_e}\right)
\end{equation}

From (\ref{Rwithfu}), we see that as $\tau_e$ increases, meaning that the elderly have a relatively larger source of income, $R$ increases in an equilibrium with a positive amount of inculcation ($\eta>0$). When a future source of income is available, the incentives to save are deadened. Similarly, increases in $\tau_e$ lessen inculcative investments, as is evident from (\ref{etawithfu}). 

When there is no inculcation, we can solve (\ref{fulleq2ge}) with $\eta=0$ and rearrange terms to get the following expression for asset returns:
\begin{equation} \label{eqRwithnoi}
    R=\frac{1+a}{\beta}\left(\frac{(1+n)(1+\beta)}{1-\gamma_e}\frac{1-\alpha}{\alpha}+\tau_e\right)
\end{equation}

Using (\ref{eqRwithnoi}) and $\eta=0$ in (\ref{eqconsratbgss}), we find that relative consumption becomes:

\begin{equation}
    \frac{c_{e}}{c_{m}}=\frac{(1+n)(1+\beta)}{1-\gamma_e}\frac{1-\alpha}{\alpha}+\tau_e
\end{equation}

We can rearrange expression (\ref{eqconsratbgss}) to give a condition for inculcation to occur once again. This condition again reflects the key role played by the capital intensity of production. No inculcation and gift-giving occur in equilibrium so long as:
\begin{equation}
    \frac{1-\alpha}{\alpha} \geq \left(\frac{R\beta}{(1+a)}-\tau_e\right)\frac{1-\gamma_e}{(1+\beta)(1+n)}
\end{equation}

\begin{figure}[htbp] 
\begin{center}
\includegraphics[width=\textwidth]{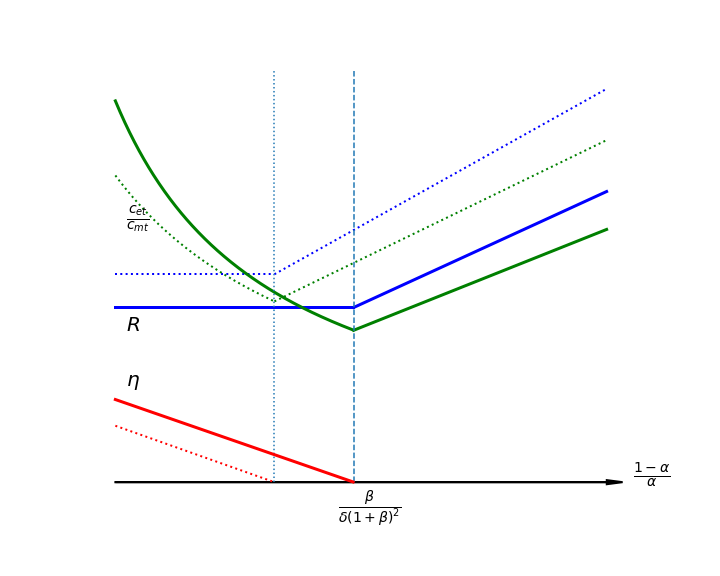}
\caption{Relative consumption of the elderly ($\frac{c_{et}}{c_{mt}}$), as a function of relative capital intensity with $\tau_e>0$. The solid lines show how the rate of return on capital is constant while $\eta$ is positive. When capital intensity reaches a certain threshold, inculcation ends and the rate of returns starts rising. The dashed lines show how these relationships change when the elderly provide productive labor $\tau_e>0$. } \label{taucompstat}
\end{center} 
\end{figure}

It is easiest to see how equilibrium is influenced when $\tau_e>0$ compared to the case in which $\tau_e=0$. as depicted on Figure \ref{capintense}. In Figure \ref{taucompstat}, we superimpose the critical line depicting the relative capital intensity at which $\eta$ goes to zero, and the resulting consumption ratios when $\tau_e>0$ on those depicted in \ref{capintense}. 

Although Figure \ref{taucompstat} suggests greater elderly labor income reduces the incidence of inculcative, gift-giving equilibria, it also indicates that the impact of labor income on  relative elderly consumption are not obvious. On the left-hand side of the figure we see that the relative consumption ratio of the elderly goes down as a result of the increased income shift. This is because total labor supply increases, which depresses labor incomes in general. 

This last set of results indicates how the advent or expansion of societal welfare targeted towards the elderly may impact the transmission and maintenance of culture, some of the particulars of which we described in Section \ref{ethnosec}. Such changes act to increase $\tau_e$ in the model, which may have a deleterious effect on inculcation, and therefore cultural transmission. 

\section{Discussion and Conclusions} \label{conclusion}

The model in this paper builds on prior research into intergenerational transfers and their role in supporting the elderly while allowing for a deeper exploration of cross-cultural findings on elderly treatment. Our analysis spans both recent phenomena, such as the demographic transition, and historical shifts in elderly treatment across varying levels of technological sophistication. By incorporating the interplay between property rights, culture, and elderly treatment, the model centers attention on elderly well-being rather than intermediate economic features like transfer patterns.

The model highlights the complexity of relationships between ownership, accumulation, production, and elderly treatment, providing theoretical support for observed ethnographic patterns including the `curvilinear hypothesis.' It also addresses how accumulation in production impacts the elderly, showing that while initial accumulation may disadvantage them, increasing its importance can eventually benefit their relative status. These findings align with the hypothesis of a nonlinear relationship between familial complexity and degree of modernization \citep{nimkoff60, blumberg72}. They echo \citet{murdock73}, who observed that the most culturally advanced societies often fall at the mid-spectrum of technological complexity, noting, ``This essentially bimodal or curvilinear distribution is inconsistent with any unilinear interpretation of social development'' \citep[p. 392]{murdock73}. Our model offers insights into why such a pattern might arise, particularly regarding elderly treatment and cultural transmission.\footnote{Similar patterns are discussed in the context of land inheritance by \citet{bakermiceli2005} and marriage institutions by \citet{bakerjacobsen2007b, bakerjacobsen2007a}.}

Several avenues for future research emerge from this work. One is the interplay between family structure, technological change, and elderly treatment. While our model touches on these issues through its focus on gift-giving and familial ties, deeper exploration of the connections between elderly treatment, marriage, and fertility is warranted.\footnote{See related discussions in \citet{cigno20} and \citet{bisver00}.} Gender differences in elderly treatment also deserve attention, as emphasized by \citet{simmons70}. Furthermore, the dynamics of inculcation and cultural transmission, including their public-goods and behavioral aspects, remain ripe for investigation \citep{bowles98}. Finally, while our focus has been on inter-societal well-being of the elderly, future research could address inequality among the elderly, examining how poverty and disparities relate to cultural transmission and gift-giving practices. We hope this framework will inspire further inquiry into these important topics.

\newpage
\begin{appendices}
\section{Data Appendix} \label{datapp}
\setcounter{table}{0}
\renewcommand{\thetable}{A\arabic{table}}
We describe the data underlying the analysis presented in Tables 1 and 2 of the text and also provide an overview of the data in \citet{simmons70}.\footnote{All data are derived directly from \citet{simmons70}, and are available on request from the authors.} Simmons developed variables to measure the presence and importance of material, social, and cultural characteristics across 71 cultures, including hunters, pastoralists, nomads, and agriculturalists. Features were coded as $+$ (very important), $=$ (somewhat important), $-$ (present but unimportant), or $0$ (absent). We recoded nonmissing values numerically, ranging from 3 (very important) to 0 (absent).

Simmons's data includes 224 features, with some values specific to women, men, or the elderly. However, missing values and vague variable definitions make formal analysis beyond Simmons's cross-tabulations challenging and potentially misleading. Therefore, we adopted a descriptive approach, compiling summary indices from his data. The evidence presented in the text is based on these indices, whose construction we now discuss.

Table \ref{openeco}1 describes the variables that we judge as capturing the market-openness of the economic environment present; specifically, the degree to which property rights, exchange, and market orientation are features of a society.   

\begin{table}[H] 
\centering
\begin{small}
{
\def\sym#1{\ifmmode^{#1}\else\(^{#1}\)\fi}
\begin{tabular}{l*{1}{cccccc}}
\toprule
                    &            &        Mean&          SD&         Min&         Max&           N\\
\midrule
Trade presence      &            &        2.33&        0.75&           1&           3&          67\\
Money in use        &            &        1.70&        1.07&           0&           3&          57\\
Private ownership of objects&            &        2.60&        0.71&           1&           3&          63\\
Private ownership of land&            &        1.37&        1.19&           0&           3&          54\\
Male property rights&            &        2.58&        0.72&           1&           3&          53\\
Female property rights&            &        1.56&        0.98&           0&           3&          32\\
Communal ownership of land&            &        2.21&        0.76&           0&           3&          56\\
\bottomrule
\end{tabular}
}
\label{openeco}\caption{Variables included in the openness of economy index, which captures the degree to which a market-oriented economy might be present in each society}
\end{small}
\end{table} 

Our market openness index is constructed by simply adding and subtracting characteristics. While more complex methods like principal components could be used, experimentation showed little benefit compared to this transparent approach. We sum the first six characteristics in Table \ref{openeco}1 and subtract the last (communal land ownership). Summary statistics for the index are in Table \ref{indices}8.

Simmons also notes how actively the elderly participate in aspects of the economy. The variables summarized in Table \ref{dirprod}2 include jobs that the elderly do in society that we judge to directly support the production of physical output in a society, or perform functions that free other members of society to engage in direct production. 

\begin{table}[H] 
\centering
\begin{small}
{
\def\sym#1{\ifmmode^{#1}\else\(^{#1}\)\fi}
\begin{tabular}{l*{1}{cccccc}}
\toprule
                    &            &        Mean&          SD&         Min&         Max&           N\\
\midrule
Male traders        &            &        2.35&        0.75&           1&           3&          20\\
Female traders      &            &        1.33&        0.52&           1&           2&           6\\
Male provision of infant services&            &        1.79&        0.93&           0&           3&          24\\
Female provision of infant services&            &        2.49&        0.83&           0&           3&          47\\
Male education of young&            &        2.83&        0.45&           1&           3&          40\\
Female education of young&            &        2.86&        0.35&           2&           3&          36\\
Male nurses         &            &        2.90&        0.31&           2&           3&          20\\
Female nurses       &            &        2.67&        0.72&           1&           3&          15\\
Males aid with agriculture&            &        1.82&        1.38&           0&           3&          17\\
Females aid with agriculture&            &        2.15&        1.31&           0&           3&          20\\
Males aid with hunting&            &        2.53&        0.62&           1&           3&          17\\
Females aid with hunting&            &        2.67&        0.52&           2&           3&           6\\
Males aid with herding&            &        1.29&        1.44&           0&           3&          14\\
Females aid with herding&            &        0.83&        1.27&           0&           3&          12\\
Males aid with collecting&            &        2.50&        0.71&           2&           3&           2\\
Females aid with collecting&            &        2.67&        0.52&           2&           3&           6\\
Males aid in household&            &        2.55&        0.52&           2&           3&          11\\
Females aid in household&            &        2.84&        0.37&           2&           3&          32\\
Males make toys and tools&            &        2.36&        0.63&           1&           3&          14\\
Females make toys and tools&            &        2.17&        1.17&           0&           3&           6\\
Male elderly fees for services&            &        2.80&        0.41&           2&           3&          25\\
Female elderly fees for services&            &        2.56&        0.62&           1&           3&          18\\
\bottomrule
\end{tabular}
}
\label{dirprod}\caption{Variables included in market development and sophistication index}
\end{small}
\end{table} 

From these variables, we form what we call a direct production index by simply adding all the variables in the table, treating missing values as zeros. The resulting index is summarized in Table \ref{indices}8 and is designed to capture something about the degree to which the elderly contribute to material life in each culture. 

The next set of variables we use to construct an index of the degree to which the elderly perform important ritualistic or social functions. These variables are summarized on table \ref{socfuns}3; the sum of the variables comprises our index of social functions, also summarized in table \ref{indices}8.

\begin{table}[H]
\centering
\begin{small}
{
\def\sym#1{\ifmmode^{#1}\else\(^{#1}\)\fi}
\begin{tabular}{l*{1}{cccccc}}
\toprule
                    &            &        Mean&          SD&         Min&         Max&           N\\
\midrule
Male rites of initiation&            &        2.40&        1.00&           0&           3&          25\\
Female rites of initiation&            &        1.90&        0.94&           0&           3&          21\\
Males perform marriage rites&            &        2.77&        0.43&           2&           3&          31\\
Females perform marriage rites&            &        2.43&        0.60&           1&           3&          21\\
Males perform funereal rites&            &        2.60&        0.76&           0&           3&          25\\
Females perform funereal rites&            &        2.54&        0.51&           2&           3&          26\\
Male priesthood     &            &        2.86&        0.36&           2&           3&          35\\
Female priesthood   &            &        1.85&        0.69&           0&           3&          13\\
Male shamans        &            &        2.86&        0.39&           1&           3&          59\\
Female shamans      &            &        2.49&        0.69&           0&           3&          47\\
Male decorators     &            &        2.33&        0.58&           2&           3&           3\\
Female decorators   &            &        2.67&        0.50&           2&           3&           9\\
\bottomrule
\end{tabular}
}
 \label{socfuns}\caption{Social functions performed by the elderly}
\end{small}
\end{table}

The variables summarized in Table \ref{knowprov} 4 describe the degree to which the elderly provide advice, store important social information, or perform other functions that require knowledge. We computed an index of knowledge provision based on just summing up these variables.

\begin{table}[H]
\centering
\begin{small}
{
\def\sym#1{\ifmmode^{#1}\else\(^{#1}\)\fi}
\begin{tabular}{l*{1}{cccccc}}
\toprule
                    &            &        Mean&          SD&         Min&         Max&           N\\
\midrule
Males provide expert advice&            &        2.84&        0.37&           2&           3&          25\\
Females provide expert advice&            &        2.71&        0.49&           2&           3&           7\\
Male stores of social information&            &        2.98&        0.14&           2&           3&          54\\
Female stores of social information&            &        2.59&        0.50&           2&           3&          27\\
Males lead festivals&            &        2.69&        0.52&           1&           3&          42\\
Females lead festivals&            &        2.50&        0.64&           1&           3&          28\\
\bottomrule
\end{tabular}
}
\label{knowprov}\caption{Knowledge provision by the elderly}
\end{small}
\end{table}

The variables in Table \ref{titular}5 describe the degree to which the elderly hold important titles in society, function as elders, judges, or chiefs. Our index of titular duties is created as a simple sum of these variables. 

\begin{table}[H]
\centering
\begin{small}
{
\def\sym#1{\ifmmode^{#1}\else\(^{#1}\)\fi}
\begin{tabular}{l*{1}{cccccc}}
\toprule
                    &            &        Mean&          SD&         Min&         Max&           N\\
\midrule
Male chiefs         &            &        2.37&        0.71&           1&           3&          62\\
Female chiefs       &            &        0.35&        0.61&           0&           2&          31\\
Male council of elders&            &        2.60&        0.65&           0&           3&          57\\
Male judges         &            &        2.56&        0.69&           0&           3&          45\\
Female judges       &            &        0.62&        0.87&           0&           2&          13\\
\bottomrule
\end{tabular}
}
\label{titular}\caption{Titular functions of the elderly}
\end{small}
\end{table}

Various variables catalog the degree to which the elderly are respected, supported by the community, or more generally viewed by society. These variables are collected in Table \ref{treatculc}6, and we collect these variables into an index that is a simple sum of these variables. 

\begin{table}[H]
\centering
\begin{small}
{
\def\sym#1{\ifmmode^{#1}\else\(^{#1}\)\fi}
\begin{tabular}{l*{1}{cccccc}}
\toprule
                    &            &        Mean&          SD&         Min&         Max&           N\\
\midrule
Male elderly respected&            &        2.81&        0.47&           1&           3&          63\\
Female elderly respected&            &        2.29&        0.81&           0&           3&          51\\
Male elderly community support&            &        2.09&        0.84&           0&           3&          33\\
Female elderly community support&            &        1.97&        0.81&           0&           3&          30\\
Male elderly family support&            &        2.80&        0.48&           1&           3&          60\\
Female elderly family support&            &        2.58&        0.57&           1&           3&          57\\
Male son-in-law elderly support&            &        1.82&        1.05&           0&           3&          22\\
Female son-in-law elderly support&            &        2.06&        1.06&           0&           3&          16\\
Male elderly seen as distinguished&            &        2.71&        0.61&           1&           3&          14\\
Female elderly seen as distinguished&            &        3.00&           .&           3&           3&           1\\
\bottomrule
\end{tabular}
}
\label{treatculc}\caption{Positive treatment practices towards the elderly}
\end{small}
\end{table}

Some variables actually measure the degree to which the elderly are pictured as having magical powers and related beliefs. We gather all of these variables into an index, which is a simple sum of the variables in Table \ref{treatculc}7. 

\begin{table}[H]
\centering
\begin{small}
{
\def\sym#1{\ifmmode^{#1}\else\(^{#1}\)\fi}
\begin{tabular}{l*{1}{cccccc}}
\toprule
                    &            &        Mean&          SD&         Min&         Max&           N\\
\midrule
Males glorified as daimons&            &        2.64&        0.54&           1&           3&          36\\
Females glorified as daimons&            &        2.44&        0.65&           1&           3&          25\\
Males glorified as heros&            &        2.77&        0.43&           2&           3&          39\\
Females glorified as heros&            &        2.55&        0.68&           1&           3&          31\\
Males friends of children&            &        2.62&        0.52&           2&           3&           8\\
Females friends of children&            &        2.45&        0.82&           1&           3&          11\\
\bottomrule
\end{tabular}
}
\label{inculcation}\caption{Inculcated beliefs about the elderly}
\end{small}
\end{table}

Table \ref{indices}8 contains a summary of all indices that we have created. These indices are used to calculate the correlations in Table 1 in the text.

\begin{table}[H]
\centering
\begin{small}
{
\def\sym#1{\ifmmode^{#1}\else\(^{#1}\)\fi}
\begin{tabular}{l*{1}{cccccc}}
\toprule
                    &            &        Mean&          SD&         Min&         Max&           N\\
\midrule
Openness of economy index&            &        6.49&        4.27&          -3&          14&          71\\
Elderly contribution to direct production index&            &        6.99&        4.35&           0&          17&          71\\
Contribution to social production index&            &        8.72&        3.71&           0&          18&          71\\
Knowledge and advice provision index&            &        3.55&        2.14&           0&           8&          71\\
Titular duties of elderly index&            &        3.02&        1.44&           0&           6&          71\\
Positive treatment of elderly index&            &        5.99&        2.68&           0&          13&          71\\
Positive attitudes to elderly inculcation index&            &        2.75&        2.22&           0&           9&          71\\
Elderly social functions index&            &        5.70&        2.98&           0&          14&          71\\
Total elderly contributions in production index&            &       19.25&        8.88&           2&          40&          71\\
\bottomrule
\end{tabular}
}
 \caption{Summary statistics for constructed indices used in Table 1.}
\end{small}
\end{table} \label{indices}

\section{Proof of Proposition \ref{giftfunny}} \label{appa}
\renewcommand{\theequation}{B.\arabic{equation}}
\setcounter{equation}{0}

The solution to (\ref{maxprob}) starts by assuming an interior solution and then checking when this solution works. Assume that agents maintain the belief that gifts are a multiplicative function of the value of $\eta_t$:
\begin{equation} \label{Psi}
    g_{mt+1}=\eta_{t+1}\Psi_{t+1}
\end{equation}

where $\Psi_{t+1}$ is to be determined. Substituting for $g_{mt+1}$ using (\ref{Psi}) into the middle-aged agent's optimization problem (\ref{maxprob}) and differentiating gives first-order conditions:

\begin{equation} \label{focs}
\begin{aligned}
  &  -\frac{1-\eta_t}{y_{mt}-d_t\eta_{t+1}-s_{mt}-g_{mt}}+\frac{\eta_t}{g_{mt}}&=0 \\
  &  -\frac{1-\eta_t}{y_{mt}-d_t\eta_{t+1}-s_{mt}-g_{mt}}+\beta\frac{R_{t+1}}{s_{mt}R_{t+1}+(1+n)g_{et+1}+y_{et+1}}&=0 \\
& -\frac{(1-\eta_t)d_t}{y_{mt}-d_t\eta_{t+1}-s_{mt}-g_{mt}}+\beta\frac{(1+n)\Psi_{t+1}}{s_{mt}R_{t+1}+(1+n)g_{et+1}+y_{et+1}}&=0 \\
\end{aligned}
\end{equation}

Rearranged, the first expression in (\ref{focs}) implies that gifts are proportional to income:

\begin{equation} \label{grf}
    g_{mt} = \eta_t(y_{mt}-d_t\eta_{t+1}-s_{mt})
\end{equation}

Substituting $g_{mt}$ from (\ref{grf}) back into (\ref{focs}) gives two conditions describing savings and inculcation behavior:

\begin{equation} \label{focs2}
\begin{aligned}
  &  -\frac{1}{y_{mt}-d_t\eta_{t+1}-s_{mt}}+\beta\frac{R_{t+1}}{s_{mt}R_{t+1}+(1+n)g_{et+1}+y_{et+1}}&=0 \\
& -\frac{d_t}{y_{mt}-d_t\eta_{t+1}-s_{mt}}+\beta\frac{(1+n)\Psi_{t+1}}{s_{mt}R_{t+1}+(1+n)g_{et+1}+y_{et+1}}&=0 \\
\end{aligned}
\end{equation}

Equations (\ref{focs2}) imply that for both savings and inculcation to occur, it must be that:

\begin{equation} \label{eqrel}
    \Psi_{t+1}=\frac{d_tR_{t+1}}{1+n}
\end{equation}

Multiplying both sides of (\ref{eqrel}) by $\eta_{t+1}$ gives $g_{mt+1}$ in \ref{Psi}:

\begin{equation} \label{eqgift}
    g_{mt+1}=\eta_{t+1}\frac{d_tR_{t+1}}{(1+n)}
\end{equation}

Since $g_{et+1}=(1+n)g_{mt+1}$, we find that $g_{et+1}=\eta_{t+1}d_tR_{t+1}$. Using this in either part of (\ref{focs2}) gives an expression describing how total savings are distributed between asset accumulation $s_{mt}$ and inculcation $\eta_{t+1}$, which is  equation (\ref{fulleq1}) in the text:

\begin{equation} \label{totsave}
s_{mt}+\eta_{t+1}d_t=\frac{\beta y_{mt}-\frac{y_{et+1}}{R_{t+1}}}{1+\beta}
\end{equation}

Inserting the right-hand side of (\ref{totsave}) into the right-hand side of (\ref{grf}) gives the equilibrium gift function in the text, equation (\ref{reactfun}):

\begin{equation} \label{gstarc}
    g^*_{mt}=\frac{\eta_t}{1+\beta}\left(y_{mt}+\frac{y_{et+1}}{R_{t+1}}\right)
\end{equation}

Comparing (\ref{Psi}) and (\ref{gstarc}), we see that
\begin{equation} \label{eqpsi}
    \Psi_t = \frac{y_{mt}+\frac{y_{et+1}}{R_{t+1}}}{1+\beta}
\end{equation}

\section{Proof of Proposition \ref{twofeatures}} \label{appb}
\renewcommand{\theequation}{C.\arabic{equation}}
\setcounter{equation}{0}

The first part of the proposition concerning aggregate savings follows directly from Appendix \ref{appa}, equation (\ref{totsave}). To get the second part of the Proposition, iterate (\ref{eqrel}) back one period and substitute the value of $\Psi_t$ into (\ref{eqpsi}) to get:
\begin{equation} \label{eqrelfin}
    R_t = \left(\frac{1+n}{1+\beta}\right)\frac{y_{mt}+\frac{y_{et+1}}{R_{t+1}}}{d_{t-1}}
\end{equation}
Which (\ref{eqrelfin}) appears in the text as (\ref{fulleq2}).

\section{Proof of Proposition \ref{dynolaws}} \label{appc}
\renewcommand{\theequation}{D.\arabic{equation}}
\setcounter{equation}{0}

We begin by rewriting (\ref{fulleq1}) as follows, using the assumption that $d_t=\delta y_{mt}$:

\begin{equation} \label{bs1}
    \eta_{t+1} = \frac{\beta}{(1+\beta)\delta}-\frac{y_{et+1}}{\delta y_{mt}}\frac{1}{R_{t+1}(1+\beta)}-\frac{s_{mt}}{\delta y_{mt}}
\end{equation}

Substitution for $R_{t+1}$, $y_{mt}$ and $y_{et+1}$ using (\ref{wagesrates}) in (\ref{bs1}), and also noting that the future capital stock depends on savings by middle-aged agents in the present, $K_{t+1}=s_{mt}N_{mt}$, gives:

\begin{equation} \label{bs2}
\eta_{t+1} = \frac{\beta}{(1+\beta)\delta}-\frac{ A_{et+1}k_{t+1}}{(1-\alpha)(1+\beta)\delta A_{mt}k_t^{1-\alpha}}-\frac{K_{t+1}}{N_{mt}\delta \alpha A_{mt}k_t^{1-\alpha}}
\end{equation}

Multiplying the last term on the right by $L_{t+1}/L_{t+1}$, and substituting $N_{mt+1}=(1+n)N_{mt}$, $A_{mt+1}=(1+a)A_{mt}$, and $A_{et+1}=(1+a)A_{et}$ into (\ref{bs2}), we have:

\begin{equation} \label{bs3}
\eta_{t+1} = \frac{\beta}{(1+\beta)\delta}-\frac{ A_{et}(1+a)k_{t+1}}{(1-\alpha)(1+\beta)\delta A_{mt}k_t^{1-\alpha}}-\frac{K_{t+1}}{L_{t+1}}\frac{(1+a)(1+n)}{\frac{N_{mt+1}A_{mt+1}}{L_{t+1}}\delta \alpha k_t^{1-\alpha}}
\end{equation}

Collecting terms and using $\tau_e=\frac{A_e}{A_m}$, and noting that $\frac{N_{mt+1}A_{mt+1}}{L_{t+1}}$, the middle-aged share in total labor supply is constant over time and equal to $1-\gamma_e$, one less the elderly share, we have:

\begin{equation} \label{bs4}
\eta_{t+1} = \frac{\beta}{(1+\beta)\delta} - \frac{k_{t+1}}{k_{t}^{1-\alpha}}\left(\frac{\tau_e(1+a)}{(1-\alpha)(1+\beta)\delta}+\frac{(1+n)(1+a)}{\alpha\delta(1-\gamma_e)}\right)
\end{equation}

This appears as (\ref{fulleq2ge}) in the text. For equilibrium relationship (\ref{fulleq1ge}), we begin by substituting the expressions in (\ref{wagesrates}) into (\ref{fulleq2}) to get: 
\begin{equation} \label{bs5}
R_t = \frac{1+n}{1+\beta}\left(\frac{ A_{mt}k_t^{1-\alpha}}{\delta  A_{mt-1}k_{t-1}^{1-\alpha}}+\frac{\frac{ A_{et+1}k_{t+1}^{1-\alpha}}{R_{t+1}}}{\delta A_{mt-1}k_{t-1}^{1-\alpha}}\right)
\end{equation}
Using $A_{mt+1}=(1+a)A_{mt}$, $A_{mt-1}=\frac{A_{mt+1}}{(1+a)^2}$, and $\tau_e=\frac{A_{et}}{A_mt}$, gives:
\begin{equation} \label{bs6}
    R_t = \frac{1+n}{1+\beta}\left(\frac{1+a}{\delta}\left(\frac{k_t}{k_{t-1}}\right)^{1-\alpha}+\frac{r_e(1+a)^2}{R_{t+1}\delta}\left(\frac{k_{t+1}}{k_{t-1}}\right)^{1-\alpha}\right)
\end{equation}

Putting in (\ref{bs6}) the expression for asset returns, $R_t=(1-\alpha)k_t^{-\alpha}$, gives expression (\ref{fulleq1ge}) in the text. One convenience of equation (\ref{bs6}) is that it yields a simple steady-state relationship for asset returns:

\begin{equation} \label{bs7}
    R = \frac{1+n}{1+\beta}\left(\frac{1+a}{\delta}+\frac{(1+a)^2r_{e}}{\delta R}\right)
\end{equation}

which easily factors into equation (\ref{Rwithfu}) in the text. 

\renewcommand{\theequation}{E.\arabic{equation}}
\setcounter{equation}{0}
\section{Proof of Proposition \ref{finalrc}} \label{appd}

We derive the relative consumption ratio for both the case in which inculcation occurs and when it does not, starting with the latter case. In this case, elderly consumption is current income plus income earned from asset accumulation in the prior period:
\begin{equation} \label{c1}
    c_{et}=y_{et}+s_{mt-1}R_t
\end{equation}
While for middle-aged agents, if there is no inculcative investment, current consumption is current income less savings:
\begin{equation} \label{c2}
    c_{mt} = y_{mt}-s_{mt}
\end{equation}
Moreover, one can use (\ref{fulleq1}) to conclude that savings is determined by: 
\begin{equation} \label{savratac}
    s_{mt}=\frac{\beta y_{mt} - \frac{y_{et+1}}{R_{t+1}}}{1+\beta}
\end{equation}
Plugging (\ref{savratac}) into (\ref{c2}) gives:
\begin{equation} \label{c3}
    c_{mt}=\frac{y_{mt} + \frac{y_{et+1}}{R_{t+1}}}{1+\beta}
\end{equation}
One can also use (\ref{savratac}) to deduce that elderly consumptio in (\ref{c1}) can be written as:
\begin{equation} \label{c4}
    c_{et}=\frac{\beta}{1+\beta}\left(y_{et}+R_ty_{mt-1}\right)
\end{equation}
Forming the ratio of expressions and factoring out $R_t$ gives:
\begin{equation} \label{autconsrat}
    \frac{c_{et}}{c_{mt}}=\beta R_t \left(\frac{y_{mt-1}+\frac{y_{et}}{R_t}}{y_{mt}+\frac{y_{et+1}}{R_{t+1}}}\right)
\end{equation}

In the case where there is inculcation, we observe that consumption for the elderly is now:
\begin{equation} \label{c5}
c_{et} = y_{et} + (1+n)g_{mt} + s_{mt-1}R_t 
\end{equation}
while consumption for the middle-aged is:
\begin{equation} \label{c6}
c_{mt} = y_{mt} -g_{mt}- s_{mt} - d_t\eta_{t+1} 
\end{equation}
Elderly consumption is most easily found using (\ref{focs2}), which allows us to conclude that:
\begin{equation} \label{c7}
    s_{mt-1}R_t + (1+n)g_{mt} + y_{et} = \beta R_{t}(y_{mt-1}-d_{t-1}\eta_t-s_{mt-1})
\end{equation}
Using (\ref{c7}) in (\ref{c5}) gives:
\begin{equation} \label{c71}
    c_{et}=R_{t}(y_{mt-1}-d_{t-1}\eta_t-s_{mt-1})
\end{equation}
Eliminating gifts in (\ref{c6}) using (\ref{reactfun}) gives:
\begin{equation} \label{c8}
c_{mt} = (1-\eta)(y_{mt} - s_{mt} - d_t\eta_{t+1} )
\end{equation}
For both (\ref{c7}) or (\ref{c8}), we can apply (\ref{fulleq1}). Then, forming the relative consumption ratio gives:
\begin{equation} \label{c9}
\frac{c_{et}}{c_{mt}}=\beta R_t\frac{y_{mt-1}+\frac{y_{et}}{R_{t}}}{y_{mt}+\frac{y_{et+1}}{R_{t+1}}}
\end{equation}
which is expression (\ref{eqconsratrf}) in the text.

\end{appendices}

\newpage
\clearpage
\singlespacing
\bibliographystyle{itaxpf} 
\bibliography{sources} 

\clearpage
\end{document}